\algnewcommand\INPUT{\item[\textbf{Input:}]}%
\algnewcommand\OUTPUT{\item[\textbf{Output:}]}%
\DeclareFontFamily{U}{mathx}{\hyphenchar\font45}
\DeclareFontShape{U}{mathx}{m}{n}{<-> mathx10}{}
\DeclareSymbolFont{mathx}{U}{mathx}{m}{n}
\DeclareMathAccent{\widebar}{0}{mathx}{"73}
\tikzstyle{startstop} = [rectangle, rounded corners, minimum width=1cm, minimum height=1cm,text centered, draw=black]
\tikzstyle{io} = [trapezium, trapezium left angle=70, trapezium right angle=110, minimum width=2cm, minimum height=1cm, text centered, text width=3cm, draw=black]
\tikzstyle{process} = [rectangle, minimum width=1cm, minimum height=1cm, text centered, draw=black]
\tikzstyle{decision} = [diamond, minimum width=1cm, minimum height=0.5cm, text centered, text width=1.5cm, draw=black]
\tikzstyle{arrow} = [thick,->,>=stealth]
\tikzstyle{connector} = [circle, minimum width=0.5cm, minimum height=0.5cm, text centered, draw=black]
\def\addauthnote#1#2{%
\expandafter\def\csname#1\endcsname##1{\todo[inline,color=#2]{#1: ##1}\xspace}
\expandafter\def\csname#1m\endcsname##1{\todo[color=#2]{#1: ##1}\xspace}
}
\newcommand{\pluseq}{\mathrel{+}=}
\newcommand*{\mytiny}[1]{\scalebox{0.80}{\ensuremath#1}}
\begin{document}
%
\title{Toward a Theory of Cyber Attacks}
%
\titlerunning{Toward a Theory of Cyber Attacks}
%
\author{Saeed Valizadeh\thanks{This work was funded by NSF grant CNS-1413996 ``MACS: A Modular Approach to Cloud Security''.} \and Marten van Dijk 
}
%
\authorrunning{\hfill}
%
\institute{\href{https://scl.uconn.edu/}{Secure Computation Laboratory},\\
	 Computer Science and Engineering Department, \\
	University of Connecticut, Storrs CT 06269, USA\\
\email{\{saeed.val,marten.van\_dijk\}@uconn.edu}\\
}
\maketitle              
\begin{abstract}
  We provide a general methodology for analyzing defender-attacker based ``games'' in which we model such games as Markov models and introduce a capacity region to analyze how defensive and adversarial strategies impact security. Such a framework allows us to analyze under what kind of conditions we can prove statements (about an attack objective $k$) of the form ``if the attacker has a time budget $T_{bud}$, then the probability that the attacker can reach an attack objective $\geq k$ is at most $poly(T_{bud})negl(k)$''. We are interested in such rigorous cryptographic security guarantees (that describe worst-case guarantees) as these shed light on the requirements of a defender's strategy for preventing more and more the progress of an attack, in terms of  the ``learning rate'' of a defender's strategy.  We explain the 
damage an attacker can achieve by a ``containment parameter'' describing the maximally reached attack objective within a specific time window. 


\keywords{Stochastic games \and Network security games \and Cyber attacks \and Security region \and Intrustion detection and prevention system \and Honeypots \and Attack modeling \and Markov models \and Security capacity region.}
\end{abstract}
\section{Introduction} \label{sec:intro} 
Cyber attacks targeting individuals or enterprises have become a predominant part of the computer/information age. Such attacks are becoming more sophisticated (qualitative aspects) and prevalent (quantitative aspects) on a daily basis \cite{cisco:Misc}. The exponential growth of cyber plays and cyber players necessitate the inauguration of new methods and research for better understanding the ``\emph{cyber kill chain}'', particularly with the rise of advanced and novel malware (e.g., Stuxnet \cite{jin2018snapshotter}, WannaCry ransomware crypto worm \cite{cisco:Misc}, the Mirai \cite{bertino2017botnets} and its variants \cite{mirVar:Misc}) and the extraordinary growth in the population of Internet residents, especially connected Internet of Things (IoT) devices.

Mathematical models can help the security research community to better understand the threat and therefore being able to analyze the attacker's conducts during the lifetime of a cyber attack. The sparse amount of research on modeling and evaluating a defensive systems' efficiency (especially from a security perspective), however, warrants the need for constructing a proper \emph{theoretical framework}. Such a framework allows the community to be able to evaluate the defensive technologies' effectiveness from a security standpoint.
In this regard, a proper model is needed to capture the interactions between the two famous players of network security games i.e., a defender (taking advantage of common security tools and technologies such as Intrusion Detection and Prevention Systems (IDPSes), Firewalls, and Honeypots (HPs)) and an attacker (and possibly its agents) who takes actions to reach its attack objective(s) in the game. Modeling only one player by itself (e.g., opportunistic/targeted attacks, or a defensive system such as an IDPS) without acknowledging the other party's capabilities, set of actions, and strategies would be unjustifiable. Hence, a realistic model should take both parties' characteristics, objectives, and actions into consideration, as they are typically oppositional. 
Game theoretic methods have been proposed to model the interactions between an attacker and a defender in a network environment \cite{lye2005game,van2013flipit,chen2009game,alpcan2006intrusion,carroll2011game}. Although insightful, as figuring out the best attack-response strategy is in the focal point of a game-theoretic analysis, such models suffer from a lack of general applicability due to the case-specific assumptions made to reduce the complexity of the problem leading to a case-specific game with a corresponding  set of players' actions and payoffs. 
 Therefore, the developed models become ineffective in the representation of general settings. 

\noindent
{\bf Our Approach.} 
We introduce a Markov Game (MG) framework and methodology for modeling various computer security scenarios prevailing today, including opportunistic, targeted and multi-stage attacks. We are particularly interested in situations in which each players' progress in the game can be viewed and modeled as an incremental process. From the viewpoint of the attacker, a progressive adversarial move associated with a probability distribution can bring it one step closer to its desired winning state (i.e., the attack objective). Similarly, by monitoring the system and capturing the adversarial moves, the defender can also get one step closer to its objective which could be different based on various attack/defense scenarios. For instance, the defender's goal could be generating a detection signal that indicates the presence of a coordinated set of activities that are part of an Advanced Persistent Threat (APT) campaign \cite{milajerdi2018holmes}, developing a signature for a malware attack to halt next adversarial moves \cite{newsome2005polygraph}, or shuffling its resources after enough number of samples (evidence) is obtained as in a Moving Target Defense strategy (MTD) \cite{saeed2016markov}.

\noindent
{\bf Contributions \& Results.} Our contributions are summarized as follows:

\begin{itemize}
	\item We introduce the notion of learning in cybersecurity games and describe a general ``game of consequences'' meaning that each player’s (mainly the attacker) chances of making a progressive move in the game depends on its previous actions. More specifically, as a consequence of the adversarial imperfect moves in the game, the other party, i.e., the defender has the opportunity to incrementally learn more and more about the attack technology. This learning enables the defender to be able to seize and halt following adversarial moves in the game, in other words, containing the attackers’ progress in the game.
	We argue that such learning is possible since the actions that need to be taken by an adversary in each cycle of a cyber attack's life inevitably entails abnormal and suspicious events and activities on both host and network levels.

	\item Unlike game theoretic methods which commonly focus on finding the best attack-response strategies for the players, we, however, with a cryptographic mindset, mainly focus on the most significant and tangible aspects of sophisticated cyber attacks: (1) the amount of time it takes for the adversary to accomplish its mission and (2) the success probabilities of fulfilling the attack objectives.
    Therefore, our goal is to translate attacker-defender interactions into a well-defined game so that we can provide rigorous cryptographic security guarantees for a system given both players' tactics and strategies. We generalize the basic notion of computational security of a cryptographic scheme \cite{lindell2014introduction} to a \emph{system} which must be defended and introduce a \emph{computationally secure system} in which a ``given \emph{system} is $(T_{bud},k,\epsilon)$\emph{-secure} (for $T_{bud},k > 0$, and $\epsilon \leq 1$), if any adversary limited with a (time) budget at most $T_{bud}$ succeeds in reaching the attack objective $k$, with probability at most $\epsilon$''.  We study under which circumstances the defense system can provide an effective response that makes $\epsilon = poly(T_{bud})negl(k)$.

	 \item By modeling the learning rate of the defender as a function $f(l)$ (which is the probability of detecting and halting a next adversarial move) where $l$ represents the number of adversarial moves and actions so far observed and collected, we present the following results:
	\begin{itemize}
		\item[$-$] If $f(l)$ does not converge to $1$, then we call such a learning rate stagnating, and the adversary can reach its winning state (attack objective) $k$ within $poly(k)$ time budget.
		\item[$-$] If $f(l)\rightarrow 1$ and more specifically $1-f(l) = O(l^{-2})$, then the probability of the adversary winning the game is proven to be polynomial in the attacker's time budget and negligible in $k$. 
		\item[$-$] The above results hold for time budget measured in logical Markov model transitions or physical time (seconds), and also hold for learning rates that remain equal to $f(l)=0$ until a certain number $l=L^*$ adversarial moves have been collected after which learning starts.
		\item[$-$] The adversary may not reach its desired winning state of $k$, but will reach  a  much smaller $k_c$ (called containment parameter) of the order $O(L^* + \log T_{bud})$, where $T_{bud}$ is the attacker's time budget, for $1-f(l) \rightarrow 0$.
		
		\item[$-$] We conclude that at a metalevel the adversary needs to find one attack exploit/vector for which the defender cannot find a fast enough increasing learning rate and the defender needs to have a learning system/methodology in place which should be able to reach fast enough learning rates for any possible attack exploit/vector. In practice infrastructures are being compromised and this means that, as given by our general framework, and presented theoretical and numerical analysis, effort is needed in order to understand associated learning rates and why these are $\leq 1-\Omega(l^{-a})$ for some `$a$' too small  or have $L^*=O(k)$ (i.e., learning starts too late).
	\end{itemize}
	
	\item The presented adversarial/defender game in terms of a Markov model is general and fits most practical scenarios as our extensive overview shows. We walk through different case studies to show how the presented modeling can be applied to real-world cyber attacks. 
 
\end{itemize}

Based on our theoretical analysis and its applicability to practical scenarios we understand when an adversary with time budget $T_{bud}$ can be contained and prevented from reaching its attack objective $k$  with probability $\geq 1- poly(T_{bud})negl(k)$. As a consequence, if such containment is in place, then this allows one to trust cryptographic protocols/systems which assume (for proving their security) attackers that have not reached $k$ (and have a key renewal / refresh operation that can be called every $T_{bud}$ seconds in order to push adversaries back to their initial state).

To the best knowledge of the authors, there has been no work so far considering the notion of learning in adversarial games and its impact on players chances of prosperity in the game. Moreover,  through this presented framework, we are able to extend the idea of security capacity introduced in \cite{saeed2016markov} and provide the definition of a \emph{security capacity region} as a metric for gauging a defensive system’s efficiency from a security perspective (by presenting rigorous cryptographic security guarantees in terms of the security capacity region).

\noindent
{\bf Paper Organization.}
The rest of this paper is organized as follows:
Section \ref{sec:backg} briefly reviews  related work, especially modeling cyber attacks as a game between an attacker and a defender.
The interactions of the defender and the adversary are
described as a stochastic network security game in section \ref{sec:gameModel&Analysis}.
The security analysis of the introduced game is presented in section \ref{sec:secAna.sim} followed by the corresponding parametric analysis of the game for various learning functions and system parameters to explore the effects of such variables on attack-defense objectives. In Section \ref{CaseStdy}, we walk through a few attack-defense scenarios to show how our presented framework can be applied to real-world examples. We finally conclude the paper and present future work in section \ref{sec:con}. 


\section{Background and Related Work}\label{sec:backg} 
Modeling cyber attacks as a game between a defender and an attacker is a classical research problem. Here, we review some of the research that has been done in this area which we find the most relevant (current state-of-the-art) and connected to this presented work. 

The intrusion detection problem in heterogeneous networks consisting of nodes with various security assets is studied in \cite{chen2009game}. The expected behaviors of rational attackers in addition to the optimal defender strategy are derived by formulating the attacker-defender interaction as a non-cooperative game. The paper concludes that sufficient resources for monitoring the environment and proper system configuration at the defender side are two necessary conditions of efficiently protecting the network. Similarly, the interaction of players is modeled as a general-sum stochastic game in \cite{lye2005game} in which Lye and Wing studied three different attack-response scenarios including defacing a website, stealing confidential data, and launching a Denial of Service (DoS) attack. The authors though left the richer and more complex scenarios for future works including a more capable defender with a more extensive action set for attack detection and prevention purposes. Such a stronger defender could potentially lure the attacker and learn the attack technology by setting up defensive agents such as honeypots within the environment. Carroll and Grosu \cite{carroll2011game} consider such stronger defense strategies, i.e., taking advantage of camouflage techniques (e.g., disguising a regular system as a honeypot or vice versa) by investigating the effects of deception on players' interaction using a signaling game. 

Motivated by the rise of advanced persistent threats, van Dijk \textit{et al.} \cite{van2013flipit} introduced the {F{\footnotesize{LIP}}}{I{\footnotesize{T}}} game to model the interaction of two players competing in a race of maximizing the amount of time each is in control of a shared computing resource while minimizing their total cost (associated with each player move). Strongly dominant strategies for both players (if there exist such strategies) are determined based on different employed attack strategies. Also, they provided general guidance on how and when to implement a cost-effective defense strategy.

Valizadeh \textit{et al.} \cite{saeed2016markov} introduced the concept of ``security capacity'' as a metric for gauging the effectiveness of an MTD strategy. The interactions of attacker and defender in dynamic environments\footnote{Those with changing system configurations and therefore attack surfaces} is modeled by probabilistic algorithms and characterized by a Markov chain. In particular, they showed how the probability of a successful adversary defeating an MTD strategy is related to the amount of time/cost spent by the adversary. The relationship between the attack success probability and the time it takes to reach the attack objectives (i.e., the winning state for the adversary) is then translated into the security capacity concept:  ``the security capacity of an  MTD game (a defense system) is at least $c$ if the probability that the attacker wins in the first $T = 2^t$ time steps is $\leq 2^{-s}$ for all $t+s = c$''. 
Connell  \textit{et al.} \cite{connell2018performance} used a similar approach as \cite{saeed2016markov} to model the attacker-defender interactions via a Markov chain in dynamic environments. In particular, a quantitative analytic model is proposed for evaluating the performance of MTD schemes, and the availability of resources in the environment and a method is recommended for maximizing a utility function that takes the tradeoffs between security and performance into consideration.

For a review of existing game-theory based solutions for network security problems see \cite{liang2013game,manshaei2013game}. For instance, Liang \textit{et al.} \cite{liang2013game} summarized the presented game models' application scenarios (both cooperative and non-cooperative games) under two categories: attack-defense analysis, and security measurement. Manshaei \textit{et al.} \cite{manshaei2013game} however, surveyed the use of game theory in addressing diverse forms of privacy and security problems in mobile and network applications. The studied works are organized in six main categories: physical and MAC layer security, security of self-organizing networks, intrusion detection systems, anonymity and privacy, network security economics, and cryptography.


\section{Game Modeling}\label{sec:gameModel&Analysis} 
In this section, we introduce a general network security game based on the interactions between an adversary, its agents (e.g., bots controlled by a botmaster), and a defender who is equipped with a logically centralized defense system\footnote{For instance, network/host-based intrusion detection and prevention systems, honeypots, etc. One motivation for our work is the advent of Software-Defined Networking (SDN) in which the entire network infrastructure can be controlled from a centralized software controller.} implemented in the network. We explain how this game can be directly mapped to an equivalent Markov model and in the next section, we provide the security analysis for this Markov model as well as presenting compelling interpretations of the role of both players' strategies in their probability of winning the introduced game. 

\subsection{The Game of Consequences}

\noindent
{\bf [Game Setup]}
In almost any sophisticated and persistent cyber attack, the adversary continues the attack until its attack objective is satisfied. However, due to the incomplete and imperfect information of the players (in this case the attacker), and the probabilistic nature of an attack's success rates, the attacker's objective usually cannot be achieved in only a few numbers of moves/attempts. Adding a defender to this picture leads to an unceasing game unless the attacker decides not to play anymore (i.e., dropping out of the game, whether the attack objective is satisfied or the chances of making a progressive move become overwhelmingly small). We model the interactions of players in such scenarios as a stochastic game. 

To create a realistic mathematical model, we make reasonable simplified assumptions from both attack and defense perspectives. This is due to the significant level of freedom in attack design and technology and the complexity and diversity of defense mechanisms and systems.
For instance, when it comes to evaluating a defense system's efficiency (especially an IDPS), the accuracy, performance, completeness, fault tolerance, and timeliness properties should be taken into considerations as the top five criteria \cite{debar2000introduction}. However, for this study, we believe considering all the playing factors in modeling such systems makes the model excessively complicated (and possibly inaccurate). For this reason, we mainly focus on \emph{security-related} concerns and specifications, i.e., the accuracy and completeness (dealing with false alarms and detection rates) of the system. This means the ability of the system in detecting malicious behaviors and parties in the environment, and taking effective actions to foil such incidents, regardless of its architecture, used methods, or its impact on the system performance.
Therefore, assuming that the non-security related traits are ideal (e.g., no latency, high performance, unlimited bandwidth, and fault tolerance), we are dealing with a defense system which is neither entirely accurate nor complete, as every information technology system suffers from security shortcomings and deficiencies.

In this regard, we consider a general system state $(i,l)$ which only captures the security-related parameters. The players start the game at the state $(i=0,l=0)$ in which it represents the inauguration of the attack and the zero-knowledge of the defender regarding the attack technology at time zero. From the adversary's perspective, its view of the system state $i$ will change if only it makes a progressive move in the game. On the other hand, by noticing that any adversarial move (whether fruitful or fruitless) can potentially be observed and captured by the defender (for instance via the defense system's agents and sensors implemented in the environment), we model the defender as an incremental online learning process\footnote{Another motivation for this work is the introduction and widespread embrace of Automated machine learning (AutoML) which provides the opportunity for automatic (and possibly distributed) attack learning, detection, and prevention.}, meaning that
the defender's view $l$ gets updated as a consequence of discovering an adversarial move (i.e., an attack sample). This increasing knowledge of the attack technology enables the defender to correctly detect and halt a new incoming malicious action with some probability $f(l)$ (true positive). It is also possible that the system fails in detecting an adversarial move (or it falsely labels it as benign) with probability $1-f(l)$ (false negative). Note that we do not care about the occurrence of false positives (categorizing a benign activity as malicious or abnormal), as they only play a role in the system performance and the defender's detection cost meaning that such incidents do not change the security state of the system (neither the attacker's nor the defender's view).

\noindent
{\bf [Attacker]} 
The attacker's objective is to reach a winning state of the form $(k, *)$ within a limited time budget $T_{bud}$. We emphasize that $k$ or in other words the attack objective differs from scenario to scenario. For instance, a malware propagator or a botmaster desires to push $k$ (in this case, the total number of infected nodes) as high as possible\footnote{Or at least 5\% of the total number of vulnerable hosts, as \cite{provos2004virtual} shows via simulation that in order to have hopes that no more than 50\% of the total vulnerable hosts ever become infected, patching process must begin before 5\% of such population become ever infected.}. On the contrary, in an advanced persistent threat, the attacker might prefer to stop playing after reaching a much smaller $k$ to minimize the attack detection probability since once the target machine has been identified, the attacker should use as few infections as possible to decrease the chance of exposing the operation. An adversary trying to access distributed information on a set of nodes within the network terminates the attack as soon as its mission is accomplished. In a similar fashion, to construct a hitlist, the attacker will conclude its reconnaissance after a compiled list of vulnerable nodes is constructed.



To reach the winning state, at any time step, the attacker issues a move associated with a success probability $p$ (purely depends on attack strategy and not the defender's maneuvering) that can potentially lead to incremental progress in the game, i.e.,  getting one step closer to the final attack objective  $(i = i +1)$. This enables us to represent the adversarial move in a single transition in the Markov model which will be explained in section \ref{ss.markovModel}. For generality, we consider a state-dependent success probability, i.e., $p_i$, since, intuitively, there could be cases in which the chances of making a progressive move depends on the current state of the attacker in the game. For instance, consider an attacker who is trying to locate vulnerable nodes within an address space of size $N$, while $i$ out of $K$ vulnerable hosts are already found, this means that the probability of hitting a new vulnerable node is $p_i = (K-i)/(N-i)$. Also, we assume the adversary is aware of its current state in the game $i$ in $(i,l)$ for some $l \geq 0$ but it does not know the defender's knowledge of the attack expressed as $l$.



Moreover, we assume that at each time step, the interaction of the attacker (or an attacker agent) with the system which is encapsulated as an adversarial move, can potentially leak some information to the defender (if observed) regarding the attack technology and methodology. This is because regardless of the mastery and skillfulness of an attacker, its taken actions during any phase of the attack, inevitably lead to abnormal and suspicious incidents and events on both host and network levels. Hence, almost all security tools and technologies rely on the existence of such attack signal indicators for detection and prevention purposes. For instance, unusual port usage \cite{inoue2009automated}, irregular system call sequences \cite{kolosnjaji2016deep}, and the occurrence of pointer value corruption on a host’s process memory \cite{liang2005fast} are amongst a few events that can happen as a result of adversarial moves on host levels.  On the network levels, such conducts include but are not limited to an increase in the network latency, high bandwidth usage, suspicious traffic on exotic ports, irregular scan activity, simultaneous identical domain name system (DNS) requests, and Internet relay chat (IRC) traffic generated by specific ports. 

However, a skillful attacker leveraging different anti-malware and IDPS evasion techniques\footnote{Obfuscation methods, fragmentation and session splicing, application/protocol violations, and DDoS attacks \cite{marpaung2012survey} are among the most common techniques used by the adversaries to decrease the attack detection probability or to increase the chances of attack prosperity.} is capable of minimizing the detection possibilities and therefore reducing the chances of leaking attack information at each adversarial move by sneaking through the defensive system. This skillfulness can provide the adversary more time to play the game as each attack sample has a lower probability of being discovered and in extreme cases, it can ideally be different for each adversarial move making the defender's job surely difficult.
For instance, an attacker taking advantage of a polymorphic or metamorphic malware (in contrast with a monomorphic one) will disclose minimum information by encrypting the content and obfuscating the instruction sequences for each connection respectively leading to more propagation time available to the attacker. 
Therefore, dealing with a skillful adversary, if the defense system captures an attack sample, this property intuitively leads to maximizing the time and effort to push out a solution to stop the attack and generate an attack signature. More importantly, it means that only a few samples are not enough for attack signature generation. Note that this property helps to minimize the information leakage for the suspicious traffic flow detection, signature generation, malware analysis, and reverse engineering processes.

\begin{figure}
	\begin{center}
		\scalebox{0.60}
		{\includegraphics{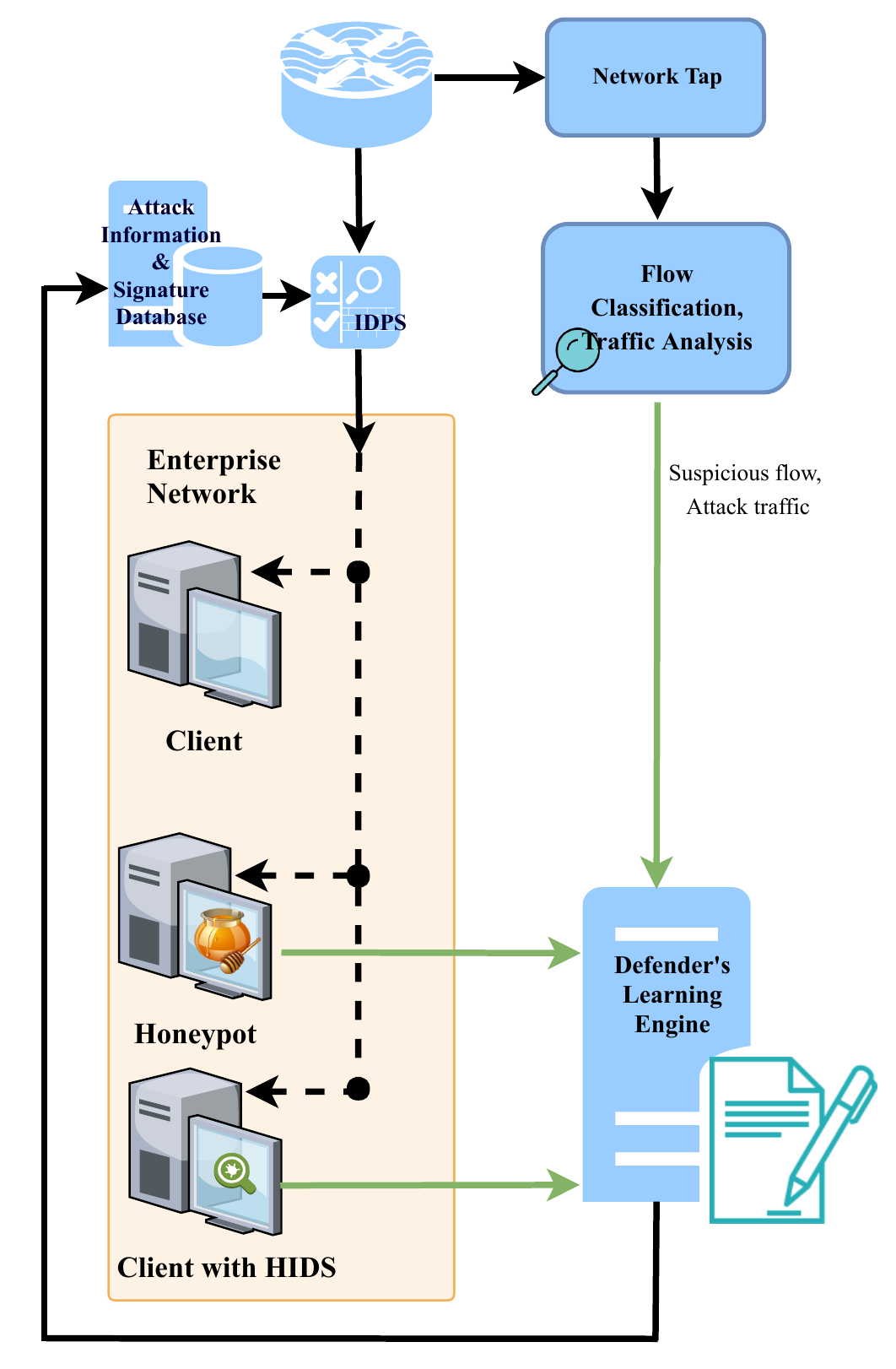}}
		\caption{Defense system architecture}
		\label{fig.ids_scenario}
	\end{center}
\end{figure}

\noindent
{\bf [Defender]} 
The defender's objective is to learn more and more regarding the attack technology and methodologies to be able to cease a next incoming adversarial move. In this regard, it monitors the environment for attack detection and prevention purposes via common security tools and technologies. Intrusion detection (and possibly prevention) systems are amongst the most common type of defense technologies used for monitoring, incident identification, attack detection, and prevention on both host and network levels\footnote{The first one is known as host-based intrusion detection system (HIDS), and the latter is known as network intrusion detection system (NIDS).}. Virtual and physical sensors are a common component of such systems used for data collection and analysis. Also, the defender can take advantage of electronic decoys known as honeypots for attack information assembly.

We give the defender the opportunity to learn from observed adversarial moves and actions.
The idea is that in the early phases of the game, the defender's knowledge of the attack is limited (almost non-existent). As the game proceeds, the defender's detection rates will be enhanced by witnessing enough number of attack samples. In order to study the learning rate of the defender or the time that the system is finally trained and is able to detect the adversary's actions with probability almost $1$, we consider a cumulative time-varying function $0\leq f(.) \leq 1$ as the detection rate, which is a function of the total number of times an adversary agent's activity is captured or in other words, an attack sample is given to the defender. The function $f$ indicates the probability that given $l$ samples so far, the system detects and halts a new incoming adversarial move. The idea is the more samples are given to the defender, the more accurate would be its attack signatures which immediately reflects in the detection rate which is used by the system to filter future adversarial moves. 

In this regard, we assume that all incoming traffic passes through an inline NIDS implementation\footnote{Meaning that network traffic directly passes through the IDPS sensors, and the system is capable of session snipping, dropping/rejecting suspicious network activities and altering and sanitizing the malicious content within a packet.} which gives the defender the ability to be able to block an adversarial move with some probability $f(l)$ in which $l$ represents the current realization of the attack by the defender (i.e., based on so far observed and collected adversarial moves and attack information). Moreover, we assume that a copy of all incoming traffic is given to the defender’s traffic analysis and classification engine via a network tap or a spanning port which provides the opportunity of detecting suspicious flows with some probability. For simplicity, however, we model this offline analysis as a probabilistic sampling process \cite{sperotto2010overview} in which each incoming adversarial move can be correctly sampled/labeled as suspicious with probability $\gamma$ (therefore $l = l +1$), and with $1-\gamma$, the defender misses the adversarial move or it falsely labels it as benign. Note that $\gamma$ reflects the classifier's accuracy in which it is the probability that an incoming packet will be marked as suspicious traffic conditioned on the fact that it is indeed malicious.
If the defender fails in bringing an adversarial move to a halt and therefore the attacker proceeds within the system with probability $1-f(l)$, the defender still has the opportunity to learn (i.e., $l = l +1$) regarding the attack technology with probability $h$ via the defense agents (e.g., HPs, HIDS) implemented in the environment. Hence, if the adversary deals with an electronic decoy, or as a consequence of attack activities, on an endpoint device equipped with a HIDS, the defender can learn regarding the attack meaning that $l = l+1$.

In summary, a progressive move by the defender ($l = l+1$) leads to an increase in its future detection rates $f$. This function gets updated in two different manners: $(1)$ if the attack traffic is correctly labeled as suspicious on the network level with probability $\gamma$ or $(2)$ if the attack activity leaks information on host levels with probability $h$ (whether via a honeypot or a host-based intrusion detection system installed on a fraction of defender systems\footnote{Notice that, in case of a honeypot, any adversarial move will lead to an increase in the defender's knowledge of the attack $l$ since all incoming communications with an HP is suspicious, while for a HIDS, not all the adversarial moves might be observed by the defense system. For simplicity, we use a single parameter $h$ as the probability of gaining information on host levels (mainly HPs). However, one can easily separate the learning from HPs and the HIDS agents by considering another parameter for HIDS agents $h'$ or in general, a cumulative function $f_{h'}$  for the host-level attack information and signatures (observed and gained from HIDS agents).}). This incremental learning of the attack enables the defender to be able to bring a new incoming adversarial move to a halt with probability $f(l+1)$.
Notice that the adversary's chance of making a progressive move in the future time steps decreases as the $f$ function's value increases over time.

\subsection{Markov Model of the  Game}\label{ss.markovModel}

The above description of the adversary-defender interactions can immediately be translated into a Markov model.  In summary, at each time step, the state of the system can be identified with a tuple $(i,l)$ in which $i$ represents the adversary's progress in the game, and $l$ delineates the defender's knowledge of the attack methodologies and technology. The adversary (or one of its agents) makes a move associated with a success probability $p_i$. Note that $p_i$ is indeed the attack success probability while there exists no opponent player in the environment (i.e., the defender). In the meantime, the defender can block an adversarial move, if it can detect it correctly or it can learn the attack technology if the attack is observed via the defense system or the attacker is in touch with a defense system agent (e.g., a honeypot) implemented in the environment.

\begin{figure}
	\centering
	\begin{tikzpicture}
	\tikzset{
		>=stealth',
		node distance=3.2cm and 3.2cm ,on grid,
		every text node part/.style={align=center},
		state/.style={minimum width=2.00cm,
			draw,
			circle},
	}
	\node[state]             (c) {$i,l$};
	\node[state, right=of c] (r) {$i+1,l$};
	\node[state, above=of c] (b) {$i,l+1$};
	\node[state, above=of r] (rr) {$i+1,l+1$};
	\draw[every loop]
	(c) edge[auto=left]  node {$m_{2}(i,l)$} (r)
	(c) edge[ auto=right] node {$m_3(i,l)$} (rr)
	(c) edge[auto=left] node {$m_4(i,l)$} (b)
	(c) edge[loop left]             node {$m_1(i,l)$} (c);
	\end{tikzpicture}
	\caption{Possible transition probabilities at each state $(i,l)$ in the Markov model of the game}
	\label{markovModel}
\end{figure}
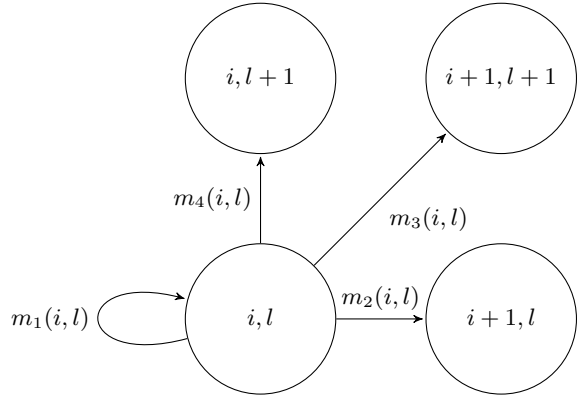
%
Let us denote the occurrence of the event that an adversarial move comes to a halt by the defense system with $\emph{F}$ and the occurrence of the event that it gets sampled (or labeled as suspicious) by the defense system with $\emph{S}$. Notice that these two events are independent, as we are assuming that the straining and filtering happens online for each detected incident whereas sampling deals with a copy of the actual traffic for attack analysis and classification.
Therefore the following outcomes are possible at each timestep:
\begin{itemize}
	\item $(F\wedge \widebar{S})$: meaning that the adversasry's move comes to a standstill by the defender (based on current knowledge of the attack technology) but not sampled which happens with probability $f(l)(1-\gamma)$, as a result, the Markov chain stays at state $(i,l)$ 
	\item  $(\widebar{F}\wedge \widebar{S})$: The adversasry's move gets neither filtered nor sampled, there exist three possible scenarios for this case:
	\begin{itemize}
		\item[$-$] if the adversary is dealing with an electronic decoy (i.e., a  honeypot), the defender has the oppurtunity to learn from this interaction meaning that the Markov model transits from state $(i,l)$ to $(i,l+1)$ with probability $(1-f(l))(1-\gamma)h$
		\item[$-$]  if the attacker makes a progressive move with probability $p_i$ then the Markov chain transits to state $(i+1,l)$ with probability $(1-f(l))(1-\gamma)p_i$.  Note that this is a ``perfect move'' for the adversary as in which the attacker gets one step closer to the attack objective while the defensive system was not able to detect and locate the adversary's activities or it falsely labeled the activity as benign (false negative) meaning that no information and knowledge regarding the attack technology is leaked to the defender
		\item[$-$] and if the attacker is not dealing with a defensive agent and not also make a progressive move with probability $(1-p_i-h)$ the Markov chain stays at the current state with probability $(1-f(l))(1-\gamma)(1-p_i-h)$
	\end{itemize}
	\item $(F\wedge S)$: If the attacker's move gets both filtered and sampled, the chain transits to the state $(i,l+1)$ with probability $\gamma f(l)$
	\item $(\widebar{F}\wedge S)$:  If the attacker's action does not get filtered but it gets sampled
	\begin{itemize}
		\item[$-$] for a successful move with probability $p_i$, the Markov chain transits to the state $(i+1,l+1)$ with probability $(1-f(l))\gamma p_i$
		\item[$-$] and for an unsuccessful move with probability $(1-p_i)$, the Markov chain transits to the state $(i,l+1)$ with probability $(1-f(l))\gamma (1-p_i)$
	\end{itemize}
\end{itemize}
In summary, the transition probabilities can be expressed by:
\begin{eqnarray}\label{eq.transitions}
m_1(i,l) &=& Prob(\mytiny{(}i,l\mytiny{)}\rightarrow\mytiny{(}i,l\mytiny{)})=f(l)(1-\gamma)+\nonumber\\ & &
(1-f(l))(1-\gamma)(1-p_i-h)\\ 
m_2(i,l) &=& Prob(\mytiny{(}i,l\mytiny{)}\rightarrow\mytiny{(}i+1,l\mytiny{)}) = \nonumber\\ & & (1-f(l))(1-\gamma)p_i \\
m_3(i,l) &=& Prob(\mytiny{(}i,l\mytiny{)}\rightarrow\mytiny{(}i+1,l+1\mytiny{)}) = \nonumber\\ & & (1-f(l))\gamma p_i\\
m_4(i,l)&=& Prob(\mytiny{(}i,l\mytiny{)}\rightarrow\mytiny{(}i,l+1\mytiny{)}) = \nonumber\\ & & (1-f(l))(1-\gamma)h+  f(l)\gamma +\nonumber\\ & &  (1-f(l))\gamma (1-p_i)
\end{eqnarray}

Fig. \ref{markovModel} depicts the possible transitions at each state of the Markov model. The initial state of the game is $(0,0)$ which it shows the starting point of the attacker in the game, and the zero knowledge of the defender at the beginning (e.g., a zero-day vulnerability/exploit). As the game evolves, reaching state $k$ is in favor of the adversary which can happen via horizontal $m_2$ or diagonal $m_3$ transitions. In the meantime, the defender's knowledge of the attack technology and signatures improves through transitions via $m_3$ and $m_4$ in which the number of attack samples provided to the defender, i.e., $l$ increases and the function $f$ gets updated consequently.

\section{Security Analysis}\label{sec:secAna.sim}
\subsection{Adversarial Containment} \label{AC-AC}

The goal of the adversary is to reach a winning state $(k,l)$ for some $l\geq 0$ within a limited time budget $T_{bud}$ in which $k$ represents the attack objective which is defined based on various attack scenarios\footnote{For instance, $k$ could be the total number of nodes the adversary wishes to compromise before using the collective of these nodes in the next attack phases (e.g., botnet applications).}, and time budget $T_{bud}$ expresses the number of transitions in the Markov model, i.e., the total number of adversarial moves.  

The objective of the defender is to learn more and more regarding the attack technologies (e.g., malicious attack payloads),  in order to be able to recognize and halt future moves by the attacker. That is, the defender hopes to push level $l$ -- the number of so far discovered adversarial moves-- higher and higher, leading to a higher probability $f(l)$ of successfully detecting and filtering a next adversarial action. The ultimate goal of the defender is to have a good enough learning mechanism in place such that the probability of reaching $k$ is negligible in $k$ (i.e., exponentially small in $k$) and only polynomial in $T_{bud}$. In other words, the adversary cannot reach a winning state in practice.

We define $w(k,T_{bud})$ as the probability of the adversary to reach a state $(k,l)$ for some $l\geq 0$ within time budget $T_{bud}$. In the next subsections we analyze this probability. We assume a parameter $p=p_i$ for all $i$, i.e., we simply give the adversary the advantage of using $p=\sup_i p_i$ instead of the smaller individual $p_i$ values. 
We want to find a tight upper bound and determine for which model parameters $\gamma$, $h$, $p$, and learning rate $f(l)$ the upper bound proves the desired asymptotic 
\begin{equation} w(k,T_{bud})=poly(T_{bud}) negl(k). \label{eq:as} \end{equation} 
 We notice that in this case, even though the adversary cannot reach a large $k$ because $w(k,T_{bud})$ is negligible in $k$, the adversary may reach a (much) smaller $k$. We are interested in the order of magnitude of this value as a sub-objective the adversary is able to reach. For this reason, we define $k_c$ as the solution of
$$ w(k_c, T_{bud}) =1/2.$$
For instance,  $k_c$ could reflect the order of magnitude of the number of compromised nodes to which the adversary is {\em contained}. By using $w(k,T_{bud})=poly(T_{bud}) negl(k)$ we can show\footnote{This implies that with significant probability the adversary can reach an attack objective at most $O(\log T_{bud})$ (e.g., compromising at most $O(\log T_{bud})$ nodes). Therefore, if the adversary would not have the advantage of using $p=\sup_i p_i$, then it would (compromise less nodes and) only use parameters $p_0, \ldots, p_{O(\log T_{bud})}$ in the Markov game. If these first few $p_i$ values do not differ much, then a constant $p$ becomes a realistic assumption which does not give the adversary that much advantage.}   `{\em containment parameter}' $k_c= O(\log T_{bud})$. 

Our analysis will show that for a stagnating learning rate (i.e., $1-f(l)$ remains larger than some constant $\tau>0$) the adversary cannot be contained ($w(k,T_{bud})$ does not follow the desired asymptotic). For a learning rate $f(l)\geq 1-O(1/l^2)$ which approaches $1$ fast enough, we are able to prove the desired asymptotic.
We also develop tight lower and upper bounds on $w(k,T_{bud})$ and show an efficient algorithm which is able to compute these bounds with $O(k\cdot T_{bud})$ complexity.


\subsection{Time budget measured in real time}\label{Real_Time}

 We first discuss the effect of translating the time budget $T_{bud}$, measured in Markov model transitions, to a time budget which is measured in real time (seconds) as this makes sense in practice. If the adversary wins as soon as $k$ nodes are compromised and if each Markov model transition takes $\Delta$ seconds, then at most $k$ moves can be made by each of the compromised nodes {\em in parallel} in $\Delta$ seconds. With a time budget of $T$ seconds, this implies $T_{bud}\leq T k/\Delta$. This means that the desired asymptotic (\ref{eq:as}) translates into
$$ w(k,T_{bud})\leq poly(T k/ \Delta) negl(k).$$
This is $poly(T)negl(k)$ and again implies that in practice the adversary cannot reach $k$ since the probability of reaching $k$ is negligible in $k$ and only polynomial in the physical time budget $T$. 

In some scenarios each adversarial move costs a certain amount, say $C$ USD, for example due to having to  rent a virtual machine in order to remotely launch or execute an attack move. In such a case the time budget $T_{bud}$, measured in Markov model transitions, directly translates into $C\cdot T_{bud}$, the amount of USD the adversary is restricted to. Such conversion allows one to reason about economically constrained adversaries.

\subsection{Stagnating learning rate} \label{stagnating}

Suppose that the learning rate stagnates in that $1-f(l)\geq \tau$ for some $\tau>0$. This means that $f(l)\leq 1-\tau$ and as a result at least a fraction $\tau$ goes undetected. 

\begin{theorem}[Stagnating learning rate]\label{Theo.stag}
	The attacker's probability $w(k,T_{bud})$ of winning   a game in which the defender's learning rate stagnates in that $1-f(l)\geq \tau$ is given by 
	\begin{equation}
w(k,T_{bud})
\geq 1-k(1-\tau p)^{T_{bud}/k}.
	\end{equation}
\end{theorem}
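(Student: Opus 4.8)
The plan is to collapse the two-dimensional Markov walk into a one-dimensional accounting of how many \emph{progressive} moves (increments of $i$) the adversary makes within $T_{bud}$ steps, and then to lower bound the chance of accumulating the $k$ increments needed to reach a winning state $(k,\cdot)$. First I would observe that, from \emph{any} state $(i,l)$, the only transitions that advance the attack objective are $m_2$ and $m_3$, whose probabilities add to
\begin{equation}
m_2(i,l)+m_3(i,l) = (1-f(l))(1-\gamma)p_i + (1-f(l))\gamma p_i = (1-f(l))\,p \geq \tau p,
\end{equation}
using the standing assumption $p_i=p$ for all $i$ and the stagnation hypothesis $1-f(l)\geq\tau$. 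The crucial point is that this lower bound $\tau p$ is uniform over all states, so at every step the probability of making \emph{no} progress is at most $1-\tau p$, independently of the current $(i,l)$.

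Next I would partition the $T_{bud}$ steps into $k$ consecutive phases, each of length $T_{bud}/k$ (read with an implicit floor when this is not an integer, which only strengthens the bound). The combinatorial heart of the argument is that if each of the $k$ phases contains at least one progressive move, then the adversary accumulates at least $k$ increments of $i$ and hence reaches $(k,\cdot)$ within the budget; thus ``every phase produces at least one progressive move'' is a \emph{sufficient} event for winning, and it suffices to lower bound its probability. For a single fixed phase, the probability that no progressive move occurs throughout its $T_{bud}/k$ steps is at most $(1-\tau p)^{T_{bud}/k}$, and a union bound over the $k$ phases shows that the probability that \emph{some} phase fails is at most $k(1-\tau p)^{T_{bud}/k}$. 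Taking the complement yields $w(k,T_{bud}) \geq 1 - k(1-\tau p)^{T_{bud}/k}$, exactly the claimed inequality.

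The one step requiring care — and the main obstacle — is that the per-step progress events are \emph{not} independent, since each step's success probability depends on the random state $(i,l)$ reached so far. I would resolve this by conditioning: writing the probability that a given phase sees no progress as a product of one-step conditional probabilities via the chain rule, and bounding each conditional factor by $1-\tau p$ using the state-independent estimate established above. Equivalently, one may argue by stochastic domination, showing that the count of progressive moves in any window stochastically dominates a $\mathrm{Binomial}$ variable with per-trial success $\tau p$. Either route reduces the dependent process to the clean geometric-type estimate $(1-\tau p)^{T_{bud}/k}$ per phase, after which the union bound and complement complete the proof.
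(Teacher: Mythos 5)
Your proof is correct, but it takes a genuinely different route from the paper's. The paper first grants the defender the benefit of $1-f(l)=\tau$ with equality, which makes all transition probabilities independent of $l$ and collapses the game to a one-dimensional chain with constant forward probability $m_2+m_3=\tau p$ and self-loop probability $m_1+m_4=1-\tau p$; it then writes the exact path-sum for reaching state $k$, restricts the number of self-loops in each of the $k$ states to at most $T_{bud}/k$ (a per-\emph{state} budget allocation), factors the restricted sum into a product of geometric series, and obtains $w(k,T_{bud})\geq\bigl[1-(1-\tau p)^{T_{bud}/k}\bigr]^{k}\geq 1-k(1-\tau p)^{T_{bud}/k}$. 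You instead allocate the budget per \emph{time phase}: $k$ consecutive windows of length $T_{bud}/k$, each required to contain a progressive move, with the dependence across steps handled by chain-rule conditioning (or stochastic domination by a $\mathrm{Binomial}(\cdot,\tau p)$ count), followed by a union bound. Your union bound is exactly the Bernoulli-inequality relaxation $\left(1-x\right)^{k}\geq 1-kx$ that the paper applies at its last step, so the two arguments land on the identical final bound, with the paper's intermediate product form being marginally tighter. What your route buys is robustness: by bounding each conditional one-step probability $m_2(i,l)+m_3(i,l)=(1-f(l))p\geq\tau p$ uniformly over states, you work directly on the original two-dimensional chain and never need the paper's reduction to $1-f(l)=\tau$ with equality — a step that implicitly invokes monotonicity of the winning probability in $f$ (plausible, but requiring a coupling argument the paper does not supply) — nor any exact path-sum formula. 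One small caution: your parenthetical that taking floors "only strengthens the bound" is backwards — phases of length $\lfloor T_{bud}/k\rfloor$ give the slightly weaker exponent $\lfloor T_{bud}/k\rfloor$ in place of $T_{bud}/k$ — but the paper is equally loose on this integrality point, so it is a shared cosmetic blemish rather than a gap.
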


	For a stagnating learning rate, the attacker almost always wins the security game, i.e., $w(k, T_{bud})\approx 1$ even for relatively small time budgets $T_{bud}=c\cdot k$ for some multiplicative factor $c\geq 1$. This is because $1-w(k,T_{bud}) \leq k(1-\tau p)^{T_{bud}/k}$ is exponentially small in $T_{bud}/k=c$. 

In order to have the desired asymptotics (\ref{eq:as}), the learning rate should not stagnate, i.e., we need $1-f(l)\rightarrow 0$ for $l\rightarrow \infty$.

\subsection{Formula for $w(k,T_{bud})$}

The theorem below characterizes $w(k,T_{bud})$ in a closed form expression. In the next subsection, we show that neglecting the sum over $T$ of $B$ values leads to tight upper and lower bounds. The remaining sum over $A$ values turns out to correspond to a simpler Markov model which can be used to compute this sum in $O(k\cdot T_{bud})$ time using dynamic programming.

\begin{theorem}[Closed Form Winning Probability]\label{closedw}
The probability $w(k, T_{bud})$ of winning is equal to
$$\sum_{L=0}^{T_{bud}-1}\sum_{B=0}^{\min\{L,(k-1)\}}
  \frac{m_2(L)+m_3(L)}{1-m_1(L)} \cdot A 
 \cdot \sum_{T=0}^{(T_{bud}-1)-[(k-1)+L-B]} B $$
 with $A$ and $B$ substituted by
 \begin{eqnarray*}
 A &=&
 \sum_{\substack{b_1,\dots,b_L\in\{0,1\} s.t.\\ \label{p-a}
					\sum_{l=1}^{L}b_l=B}}
\prod_{l=0}^{L-1} \bigg( b_{l+1}\frac{m_3(l)}{1-m_1(l)} + (1-b_{l+1})\frac{m_4(l)}{1-m_1(l)} \bigg) \\
&& \cdot \sum_{\substack{g_0\geq 1,\dots,g_L\geq 1 s.t.\\ \label{p-b}
					\sum_{l=0}^{L}(g_l-1)=(k-1)-B}}
  \prod_{l=0}^{L} \bigg( \frac{m_2(l)}{1-m_1(l)}\bigg)^{g_l-1} \mbox{, and } \\
  B &=&	\sum_{\substack{t_0,\dots,t_L s.t.\\ \label{p-c}
					\sum_{l=0}^{L}t_l=T}}
	\prod_{l=0}^{L}			{t_l+g_l-1 \choose t_l}
 m_1(l)^{t_l}(1-m_1(l))^{g_l}.
  \end{eqnarray*}
\end{theorem}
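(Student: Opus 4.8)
The plan is to compute $w(k,T_{bud})$ directly as the total probability, summed over all first-passage trajectories that reach column $i=k$ within $T_{bud}$ transitions, of each trajectory's probability. The crucial simplification is that we have set $p_i=p$ for all $i$, so $m_1,\dots,m_4$ depend only on the level $l$, not on $i$; this is what makes a clean enumeration possible. The central device I would use is to split every trajectory into its \emph{embedded jump chain} (the path obtained by deleting all self-loops $m_1$) and its \emph{holding times} (the number of consecutive self-loops spent at each visited state before jumping). Given the jump path these two pieces are independent, so I can compute each separately and recombine; the factor $A$ will turn out to encode the jump path and the factor $B$ the holding times.

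First I would analyze the jump chain. After deleting self-loops, from a state at level $l$ the chain moves right ($m_2$), diagonally right-and-up ($m_3$), or up ($m_4$) with normalized probabilities $m_j(l)/(1-m_1(l))$, which sum to $1$ because $m_1+m_2+m_3+m_4=1$. A first-passage jump path to column $k$ is a monotone staircase described completely by: the level $L$ at which the final step into column $k$ is made; for each level $l=0,\dots,L$ the number $g_l-1\ge 0$ of horizontal $m_2$ steps at that level; and for each increment $l\to l+1$ an indicator $b_{l+1}$ recording whether it was diagonal ($m_3$, $b=1$) or vertical ($m_4$, $b=0$). Writing $B=\sum_l b_l$, bookkeeping of the $i$-coordinate forces $\sum_{l=0}^{L}(g_l-1)=(k-1)-B$, since the $(k-1)-B$ horizontal steps, the $B$ diagonal steps, and the single final step must total $k$; this also yields $B\le\min\{L,k-1\}$. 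Multiplying jump probabilities along such a path gives exactly $\frac{m_2(L)+m_3(L)}{1-m_1(L)}\cdot A$, where the front factor sums the two ways (right or diagonal) of taking the final winning step, and $A$ collects the horizontal-step factors $\bigl(\tfrac{m_2(l)}{1-m_1(l)}\bigr)^{g_l-1}$ together with the level-increment factors.

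Next I would reinstate the holding times. At each visited state at level $l$ the number of self-loops before the chain jumps is geometric with parameter $1-m_1(l)$, contributing $m_1(l)^{s}(1-m_1(l))$ for $s$ loops. Since the staircase visits exactly $g_l$ states at level $l$, the total number $t_l$ of self-loops accumulated at that level is a sum of $g_l$ independent geometrics, hence negative binomial, and summing over the $\binom{t_l+g_l-1}{t_l}$ ways of distributing $t_l$ loops among the $g_l$ states produces the factor $\binom{t_l+g_l-1}{t_l}m_1(l)^{t_l}(1-m_1(l))^{g_l}$ of $B$. The time budget then enters only as one global constraint: the jump path uses $k+L-B$ jumps and $T=\sum_l t_l$ self-loops, so reaching column $k$ within $T_{bud}$ transitions requires $T\le T_{bud}-(k+L-B)=(T_{bud}-1)-[(k-1)+L-B]$, which is precisely the upper limit of the innermost sum over $T$.

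Finally I would recombine, summing (jump-path probability)$\times$(holding-time probability) over all shapes $(L,B,(b_l),(g_l),(t_l))$ subject to this budget constraint. The normalizations then cancel cleanly: at each level $l$ the exponent of $(1-m_1(l))$ is $-(g_l-1)$ from the horizontal factors in $A$, plus $-1$ from the level-increment or final step, plus $+g_l$ from $B$, for a total of $0$, leaving only the bare transition probabilities $m_2(l)^{g_l-1}m_1(l)^{t_l}$ times a diagonal or vertical factor. This reproduces the stated expression. I expect the main obstacle to be purely the combinatorial bookkeeping rather than any analytic difficulty: one must verify that the decomposition is a genuine bijection onto first-passage trajectories (each winning path counted once, with the final winning step correctly separated from the body horizontal steps at level $L$, and level increments allowed while $i=k-1$ via $g_l=1$), that the stars-and-bars count is exactly $\binom{t_l+g_l-1}{t_l}$, and that the $g_l$ shared between the $A$-sum and the $B$-sum are read consistently, so that the displayed product $A\cdot\sum_T B$ denotes a single nested sum over the common variables. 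Once this structure is pinned down, the cancellation of the $(1-m_1(l))$ factors and the rearrangement into the claimed closed form are routine.
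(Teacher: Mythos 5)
Your proposal is correct and follows essentially the same route as the paper's proof: both decompose each first-passage trajectory into the self-loop-free staircase path encoded by $(L, B, (b_l), (g_l))$ with exit probabilities normalized by $1-m_1(l)$, attach geometric holding times whose per-level aggregation yields the negative-binomial factor $\binom{t_l+g_l-1}{t_l}m_1(l)^{t_l}(1-m_1(l))^{g_l}$, separate the final winning step as $\frac{m_2(L)+m_3(L)}{1-m_1(L)}$, and impose the budget through the identical constraint $T\leq (T_{bud}-1)-[(k-1)+L-B]$. The only cosmetic difference is that the paper first tracks per-state self-loop counts $t(i,l)$ and then collapses them into the binomial coefficient by an explicit sum manipulation, whereas you invoke the stars-and-bars count directly.
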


\subsection{Upper and Lower Bounds on $w(k, T_{bud})$} \label{AC-dp}

In the notation of Theorem \ref{closedw}, we define $\bar{w}$ as 
	\begin{eqnarray}
	\bar{w}(k,T_{bud}) 
	&=&
	\sum_{L=0}^{T_{bud}-1}\sum_{B=0}^{\min\{L,(k-1)\}} \frac{m_2(L)+m_3(L)}{1-m_1(L)}\cdot A. 
	\label{UB}
	\end{eqnarray}

\begin{theorem}[Upper and Lower Bounds on $w$]\label{Theo.wbar}
	 For $v\geq k$, we have
	\begin{equation}
	 \bar{w}(k,T_{bud}/v)\cdot (1- (k + \frac{T_{bud}}{v}) (1-\gamma)^{\frac{v}{k}}) \leq w(k,T_{bud}) \leq \bar{w}(k,T_{bud}). \label{LB}
	\end{equation}
\end{theorem}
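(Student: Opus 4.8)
The plan is to read both $w(k,T_{bud})$ and $\bar{w}$ through the path decomposition underlying Theorem \ref{closedw}, and to recognize the factor dropped in passing from $w$ to $\bar{w}$ as a genuine probability. First I would observe that any trajectory reaching column $k$ within budget consists of $N=k+L-B$ \emph{escape} transitions (the $m_2$, $m_3$, $m_4$ moves that actually change state: $k$ progress moves of which $B$ are diagonal, plus $L-B$ vertical learning moves) interleaved with self-loops ($m_1$). The quantities $\frac{m_2(l)+m_3(l)}{1-m_1(l)}$, $\frac{m_3(l)}{1-m_1(l)}$, $\frac{m_4(l)}{1-m_1(l)}$ in the prefactor and in $A$ are exactly the transition probabilities of the embedded jump chain (conditioned on leaving a state), so $\bar{w}(k,M)=\sum_{L=0}^{M-1}\sum_B(\cdots)A$ is precisely the probability that this jump chain reaches column $k$ with final learning level $L\le M-1$. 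The capital-$B$ factor encodes the self-loop waiting times: for fixed escape multiplicities $g_l$ the number of self-loops at level $l$ is negative-binomial, and the identity $\sum_{t\ge 0}\binom{t+g-1}{t}m_1(l)^t=(1-m_1(l))^{-g}$ gives $\sum_{T\ge0}B=1$. Hence each inner $T$-sum in Theorem \ref{closedw} is a partial sum of a quantity totalling $1$, so it lies in $[0,1]$; replacing every such inner sum by $1$ only increases the nonnegative expression and yields $w(k,T_{bud})\le\bar{w}(k,T_{bud})$, the upper bound, immediately.

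For the lower bound I would condition on the jump path. The key analytic input is that every escape is fast: expanding $m_1(l)=(1-\gamma)\left[f(l)+(1-f(l))(1-p-h)\right]$ and using $f(l)+(1-f(l))(1-p-h)\le 1$ shows $m_1(l)\le 1-\gamma$, so the escape probability $1-m_1(l)\ge\gamma$ at every level. Consequently, along any fixed jump path the number of self-loops preceding each escape is stochastically dominated by a geometric variable with parameter $\gamma$. I would then restrict attention to the jump paths counted by $\bar{w}(k,T_{bud}/v)$, i.e. those with $L\le T_{bud}/v-1$, for which $N=k+L-B\le k+T_{bud}/v$. Allocating to each escape a window of $v/k$ transitions, the probability that a given escape fails to occur within its window is $m_1(l)^{v/k}\le(1-\gamma)^{v/k}$, so a union bound over the $N\le k+T_{bud}/v$ escapes shows that with probability at least $1-(k+T_{bud}/v)(1-\gamma)^{v/k}$ every escape lands inside its window.

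When all escapes stay within their windows the total number of transitions is at most $N\cdot(v/k)\le(k+T_{bud}/v-1)(v/k)=v(1-1/k)+T_{bud}/k\le T_{bud}$, the last step using $v\le T_{bud}$ together with the slack supplied by the ``$-1$'' in $L\le T_{bud}/v-1$; for $v>T_{bud}$ the sum defining $\bar{w}(k,T_{bud}/v)$ is empty and the claim is trivial. Thus every such realized trajectory is counted in $w(k,T_{bud})$, and summing $\Pr[\text{jump path}]\cdot\Pr[\text{within budget}\mid\text{path}]$ over the jump paths with $L\le T_{bud}/v-1$ factors as $\bar{w}(k,T_{bud}/v)\cdot\bigl(1-(k+T_{bud}/v)(1-\gamma)^{v/k}\bigr)$, giving the lower bound. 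I expect the main obstacle to be the bookkeeping of this window argument: choosing the window size $v/k$ so that it simultaneously produces the exact exponent $(1-\gamma)^{v/k}$ in the union bound and keeps $N\cdot(v/k)$ within the true budget $T_{bud}$, which forces one to track the counting $N=k+L-B$ carefully and to exploit precisely the $-1$ in the truncation rather than a cruder bound $L\le T_{bud}/v$.
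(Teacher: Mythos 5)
Your proposal is correct and takes essentially the same route as the paper: the upper bound is the paper's observation that the inner self-loop sum in Theorem \ref{closedw} totals $1$ (your negative-binomial identity), and your lower bound is the probabilistic reading of the paper's computation, which likewise restricts to $L\leq T_{bud}/v-1$, caps each visited state's self-loop count (obtaining per-state failure probability $m_1(l)^{v/k}\leq(1-\gamma)^{v/k}$ from $1-m_1(l)\geq\gamma$), and applies the union bound $\prod_l\bigl(1-g_l m_1(l)^{v/k}\bigr)\geq 1-\bigl(\sum_l g_l\bigr)(1-\gamma)^{v/k}$ over the $\sum_l g_l = k+L-B\leq k+T_{bud}/v$ escapes. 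The only immaterial difference is that you allocate a uniform window of $v/k$ transitions per escape, whereas the paper allocates $(v-k)/g_l$ self-loops per state at level $l$ and then relaxes to the exponent $v/k$ via $g_l\leq k$, with the budget arithmetic closing the same way in both versions.
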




In particular for $v= k \ln ((T_{bud}+k)/\epsilon) / \ln (1/(1-\gamma))$, we have $w(k,T_{bud}) \geq \bar{w}(k,T_{bud}/v)\cdot (1-\epsilon)$. 
Upper bound  (\ref{UB}) proves that if $\bar{w}(k,T_{bud})$ has the desired complexity, i.e., $\bar{w}(k,T_{bud})=poly(T_{bud}) negl(k)$, then $w(k,T_{bud})$ inherits this desired asymptotics. And vice versa, the lower bound in (\ref{LB}) for the special choice of $v$ proves that if $w(k,T_{bud})$ has desired asymptotics $w(k,T_{bud})=poly(T_{bud}) negl(k)$, then also $\bar{w}(k,T_{bud}) \leq w(k, T_{bud}\cdot v)/(1-\epsilon) = poly(T_{bud} \cdot v)\cdot negl(k) = poly(T_{bud})negl(k)$. We conclude that in order to find out which learning rates lead to the desired asymptotics  for the defender, we only need to study which learning rates lead to $\bar{w}(k,T_{bud})=poly(T_{bud})negl(k)$.

The advantage of studying $\bar{w}(k,T_{bud})$ is having eliminated expression $B$ in Theorem \ref{closedw} which represents the effect of self-loops (and turns out to be very inefficient to evaluate as an exact expression). Probability $\bar{w}(k,T_{bud})$ corresponds to a Markov model without any self-loops and with horizontal transition probabilities $m_2(l)/(1-m_1(l))=(1-\gamma)\alpha(l)$, diagonal transition probabilities $m_3(l)/(1-m_1(l))=\gamma \alpha(l)$, and vertical transition probabilities $m_4(l)/(1-m_1(l))=1-\alpha(l)$, where 

\begin{equation} \alpha(l) = \frac{p(1-f(l))}{\gamma + (1-\gamma)(p+h)(1-f(l))}. \label{eq-alpha} \end{equation}

 Algorithm \ref{alg:wbar}  depicts how $\bar{w}(k,T_{bud})$ can be computed in $O(k\cdot T_{bud})$ time using dynamic programming -- and this is what we use in section \ref{AC-Param} to evaluate $\bar{w}(k,T_{bud})$ for different learning rates.
 

\begin{algorithm}
	\caption{$\bar{w}$ Computation}
	\label{alg:wbar}
	\begin{algorithmic}[1]
		\Function{$\bar{w}$}{$k,T_{bud}$}
		\State{$pr \gets zeros(k-1,T_{bud}-1)$}
		\State{$pr[0,0]\gets 1$}
		\For{$i\gets 0,T_{bud}-1$}
		\If{$i != 0$}
		\State{$pr[0,i]\gets(1-\alpha(i-1))pr[0,i-1]$}
		\EndIf
		\For{$j \gets 1,k-1$}
		\If{$i==0$}\State{$pr[j,i]\gets [(1-\gamma)\alpha(0)]^{j}$}
		\Else \State{$pr[j,i]\gets (1 - \gamma)\alpha(i)pr[j-1,i] + \gamma \alpha(i - 1) pr[j-1,i-1] + (
			1 - \alpha(i - 1))pr[j,i-1]$}
		\EndIf
		\EndFor
		\EndFor
		\State{$\bar{w}\gets 0$}
		\For{$i\gets 0,T_{bud}-1$}
		\State{$\bar{w} \gets \bar{w}+\alpha(i)pr[k-1,i]$}
		\EndFor
		\State \textbf{return}{$~\bar{w}$}
		\EndFunction
	\end{algorithmic}
\end{algorithm}

\subsection{Analyzing Learning Rate $f(l)=1-\frac{d}{(l+2)^2}$}

We further upper bound $\bar{w}(k, T_{bud})$ in the next theorem which shows 
 that small enough learning rates $f(l)\geq 1- d/(l+2)^2$ will attain our desired asymptotic (as explained as a consequence after the theorem).
In Section \ref{AC-phase}  we will discuss what happens if there is an initial period during which the adversary already starts making progress in the game (e.g., compromising nodes) without the defender being aware or not having learned anything. For example, $f(l)=0$ for $l\leq L^*$ and $f(l)\geq 1-\frac{d}{(l-L^*)^2}$ for $l>L^*$. Here,  $L^*$ represents this initial period; only after a sufficient level $L^*$ is reached, i.e., a sufficient number of attack samples (e.g., malware payloads) have been collected, the learning rate increases.

\begin{theorem}[Upper Bound on $\bar{w}$]\label{Theo.wbarUB}
Assume that there exists a $d\geq 0$ such that 
$1-f(l)\leq d$ for all $l\geq 0$, or 
equivalently, $\beta(l)=\sqrt{\frac{1-f(l)}{d}}\leq 1$ for all $l\geq 0$. 
Let $\theta\geq (1-\gamma)/\gamma$. Then, 
		\begin{eqnarray*}
		\bar{w}(k,T_{bud})
		&\leq&  	(1+\theta^{-1}) \sum_{L=0}^{T_{bud}-1} [\theta p d]^{k} \prod_{l=0}^{L} \left(1+ (1+\theta^{-1})\frac{\beta(l)}{1-\beta(l)}\right).
		\end{eqnarray*}

Let 
$$ c =\int_{l=0}^{T_{bud}-1} \frac{\beta(l)}{1-\beta(l)} dl +  \frac{\beta(0)}{1-\beta(0)}.$$
Then, minimizing the above upper bound with respect to $\theta$ proves
$$ c\leq \frac{1-\gamma}{\gamma} k \ \  \Rightarrow  \  \  \bar{w}(k,T_{bud})\leq  (1-\gamma)^{-1}  T_{bud} e^{(1-\gamma)^{-1} c} [\frac{1-\gamma}{\gamma}pd]^{k} .$$
Applying this statement for 
$$ 1-f(l) \leq \frac{d}{(l+2)^2}$$
gives
\begin{eqnarray*}
	&& T_{bud} \leq e^{-1+k (1-\gamma)/\gamma }  \\
	&& \hspace{1cm} \Rightarrow  \  \ \bar{w}(k,T_{bud}) \leq 
\frac{e^{1/(1-\gamma)}}{1-\gamma} \cdot T_{bud}^{1+1/(1-\gamma)} \cdot  [\frac{1-\gamma}{\gamma}pd]^{k} .
\end{eqnarray*}
\end{theorem}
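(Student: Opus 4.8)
The plan is to read off $\bar{w}(k,T_{bud})$ as the probability of reaching column $k$ in the self-loop-free walk described just after Theorem~\ref{Theo.wbar}, whose steps are rightward with probability $(1-\gamma)\alpha(l)$, diagonal with probability $\gamma\alpha(l)$, and upward with probability $1-\alpha(l)$ (with $\alpha$ as in (\ref{eq-alpha})). Organizing paths by the final level $L$ at which column $k$ is crossed, each such path makes exactly $k$ rightward moves (horizontal or diagonal), some unbounded number of horizontal moves at each visited level, and some purely vertical moves. I would encode this with a per-level generating function in a variable $z$ marking rightward moves, namely
\[ G_L(z)=\Big(\prod_{l=0}^{L-1}\frac{\gamma\alpha(l)z+1-\alpha(l)}{1-(1-\gamma)\alpha(l)z}\Big)\frac{\alpha(L)z}{1-(1-\gamma)\alpha(L)z}, \]
where the geometric factor $1/(1-(1-\gamma)\alpha(l)z)$ sums over the horizontal moves at level $l$ and the numerator $\gamma\alpha(l)z+1-\alpha(l)$ records the diagonal/vertical choice. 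This reproduces exactly the constrained $g_l$- and $b_l$-sums defining $A$ in Theorem~\ref{closedw}, so that $\bar{w}(k,T_{bud})=\sum_{L=0}^{T_{bud}-1}[z^k]G_L(z)$.

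Since $G_L$ has nonnegative coefficients, the first move is Chernoff-style coefficient extraction, $[z^k]G_L(z)\le G_L(z_0)/z_0^k$ for any admissible $z_0>0$, and I would take $z_0=1/(\theta pd)$ so that $z_0^{-k}=(\theta pd)^k$ accounts for all $k$ rightward moves at once. The crux is bounding the per-level factors. Using that the denominator of $\alpha$ is at least $\gamma$ gives $\alpha(l)\le pd\,\beta(l)^2/\gamma$, and the hypothesis $\theta\ge(1-\gamma)/\gamma$ yields $(1-\gamma)\alpha(l)z_0\le\beta(l)^2$ and $\gamma\alpha(l)z_0\le\beta(l)^2/\theta$. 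Hence each interior factor is at most $(1+\beta(l)^2/\theta)/(1-\beta(l)^2)=1+(1+\theta^{-1})\beta(l)^2/(1-\beta(l)^2)$, which is $\le 1+(1+\theta^{-1})\beta(l)/(1-\beta(l))$ by the elementary inequality $\beta^2/(1-\beta^2)\le\beta/(1-\beta)$; the terminal factor $\alpha(L)z_0/(1-(1-\gamma)\alpha(L)z_0)\le\frac{1}{\theta\gamma}\beta(L)/(1-\beta(L))$ gets absorbed into the leading $(1+\theta^{-1})$ after checking $\theta\le\gamma(\theta+1)^2$ on $[(1-\gamma)/\gamma,\infty)$. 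Multiplying over levels and summing over $L$ gives the first displayed inequality.

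For the second implication I would take logarithms, apply $\ln(1+x)\le x$ to turn the product into $\exp\big((1+\theta^{-1})\sum_{l=0}^{L-1}\beta(l)/(1-\beta(l))\big)$, and compare the sum to its integral (valid since $\beta$, hence $\beta/(1-\beta)$, is non-increasing as $f$ increases), which is exactly $c$; bounding the $T_{bud}$ terms of the outer sum by $T_{bud}$ times the largest then yields $\bar{w}\le(1+\theta^{-1})T_{bud}(\theta pd)^k e^{(1+\theta^{-1})c}$. Minimizing the objective $k\ln\theta+(1+\theta^{-1})c$ gives unconstrained optimum $\theta=c/k$, so precisely when $c\le\frac{1-\gamma}{\gamma}k$ the constrained minimum over $\theta\ge(1-\gamma)/\gamma$ sits at the boundary $\theta=(1-\gamma)/\gamma$; substituting $\theta^{-1}=\gamma/(1-\gamma)$ reproduces the stated bound with the constants $(1-\gamma)^{-1}$, $e^{(1-\gamma)^{-1}c}$, and $[\frac{1-\gamma}{\gamma}pd]^k$.

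Finally, for $1-f(l)\le d/(l+2)^2$ I would use $\beta(l)\le 1/(l+2)$, whence $\beta(l)/(1-\beta(l))\le 1/(l+1)$ and $c\le \frac{\beta(0)}{1-\beta(0)}+\int_0^{T_{bud}-1}\frac{dl}{l+1}\le 1+\ln T_{bud}$; the premise $c\le\frac{1-\gamma}{\gamma}k$ becomes $T_{bud}\le e^{-1+k(1-\gamma)/\gamma}$, and $e^{(1-\gamma)^{-1}c}\le e^{1/(1-\gamma)}T_{bud}^{1/(1-\gamma)}$ turns the previous bound into $\frac{e^{1/(1-\gamma)}}{1-\gamma}T_{bud}^{1+1/(1-\gamma)}[\frac{1-\gamma}{\gamma}pd]^k$. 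I expect the single genuine obstacle to be the per-level bounding step: choosing $z_0=1/(\theta pd)$ and massaging each factor of $G_L(z_0)$ into the precise form $1+(1+\theta^{-1})\beta(l)/(1-\beta(l))$ while spending the constraint $\theta\ge(1-\gamma)/\gamma$ in exactly the two places it is needed (to control $(1-\gamma)\alpha(l)z_0$ and to absorb the terminal factor); the remaining steps are routine $\ln(1+x)\le x$ estimates, a sum-to-integral comparison, and a one-variable minimization.
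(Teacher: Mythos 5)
Your proposal is correct, and it recovers each of the paper's intermediate bounds, but it reaches the key first inequality by a different formal mechanism. The paper manipulates the nested sums of Theorem \ref{closedw} directly: it inserts the factor $Z=\prod_{l}(\theta pd\,\beta(l))^{b_{l+1}}$, bounds the $g$-sum times $Z$ via geometric series by $(\theta pd)^{k-1}\prod_l (1-\beta(l))^{-1}$, and factorizes the $b$-sum times $Z^{-1}$ over the binary choices into $\prod_l(1+\beta(l)/\theta)$. Your generating function $G_L(z)$ together with the coefficient bound $[z^k]G_L(z)\le G_L(z_0)z_0^{-k}$ at $z_0=1/(\theta pd)$ is the same exponential tilt in compact form---the $Z$/$Z^{-1}$ redistribution is precisely a tilt of each rightward move by $\theta pd$---but your bookkeeping differs in one genuine respect: you spend the tilt on all $k$ rightward moves including the final winning transition, which leaves the terminal factor $\frac{1}{\gamma\theta}\cdot\frac{\beta(L)}{1-\beta(L)}$ that you correctly absorb via $\theta\le\gamma(\theta+1)^2$ on $[(1-\gamma)/\gamma,\infty)$ (the quadratic $\gamma\theta^2+(2\gamma-1)\theta+\gamma$ is indeed nonnegative there, its larger root being at most $(1-\gamma)/\gamma$); the paper instead keeps the final transition probability, bounded by $(1+\theta)pd\,\beta(L)^2$, outside the tilt and absorbs it with the cruder $\beta(L)^2\le 1+\beta(L)/\theta$, avoiding that extra check. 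For the exponentiation step the paper proves Lemma \ref{lemma.diri}, namely $\prod_{l=0}^L(1+yB'(l))\le e^{y(B'(0)-B(0)+B(L))}$, by differentiating in $y$, whereas your direct $\ln(1+x)\le x$ plus sum-to-integral comparison is more elementary and yields the identical constant $c$; note that both arguments require $f$ non-decreasing, an assumption the theorem statement leaves implicit but which the paper also invokes ($f'(l)\ge 0$ ``because learning only increases''). Your minimization over $\theta$ (interior optimum $c/k$, boundary value $(1-\gamma)/\gamma$ exactly when $c\le k(1-\gamma)/\gamma$) and the specialization $\beta(l)\le 1/(l+2)$, giving $c\le 1+\ln T_{bud}$, coincide with the paper's. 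What your route buys is a shorter, more systematic derivation that never handles the constrained $b_l$ and $g_l$ sums explicitly, at the cost of one additional absorption inequality; the paper's route buys fully explicit elementary summations at the cost of the somewhat ad hoc $Z$-redistribution trick.
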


For $f(l)\geq 1-\frac{d}{(l+2)^2}$ with $d<\frac{\gamma}{p(1-\gamma)}$, the theorem shows that if $T_{bud}< exp(k)$, then $w(k,T_{bud})\leq \bar{w}(k,T_{bud})= poly(T_{bud})negl(k)$ satisfying the desired asymptotics. 
 For analyzing other learning rates, the more  general upper bound of the theorem can be used.

\subsection{Delayed learning} \label{AC-phase}

In practice, learning only starts after seeing a sufficient number of attack samples. That is, $f(l)=0$ for $l< L^*$ for some $L^*$ after which $f(l)$ starts to converge to $1$. What would be the attacker state in the game during this initial phase, where no learning takes place (for instance, how many nodes will be compromised)? 

\begin{theorem}[Delayed learning]\label{Theo.delay} Suppose $f(l)=0$ for $l< L^*$.
	Define $u(k^*,L^*)$ as the probability  that $L^*$ is reached by a state $(i,L^*)$ with $i\leq k^*$, where we give the adversary the advantage of an unlimited time budget. Then, for $k^*\geq L^*+1$,
	$$ u(k^*,L^*) \geq 1- L^* [\frac{(1-\gamma)p}{1-(1-\gamma)(1-(p+h))}]^{\frac{k^*-1}{L^*}}. $$

Let $w'(k,T_{bud})$  correspond to learning rate $f(l+L^*)$ as a function of $l$ (e.g., $1-f(l+L^*)\leq \frac{d}{(l+2)^2}$). Then, for $k^*\geq L^*+1$,
\begin{equation} w(k,T_{bud})\leq (1- u(k^*,L^*)) + w'(k-k^*,T_{bud}),\label{eqdelayed} \end{equation}
where $1-u(k^*, L^*)$ is the probability that  $>k^*$ is achieved during the initial phase and where $ w'(k-k^*,T_{bud})$ is the probability that another $k-k^*$ progressive adversarial moves happen after the initial phase with time budget $T_{bud}$. 
\end{theorem}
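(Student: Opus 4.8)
The plan is to treat the delayed-learning game as two phases glued together at the first moment the defender's knowledge reaches level $L^*$, and to bound each phase separately.

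\textbf{The learning-free phase and the bound on $u$.} First I would exploit that for $l<L^*$ we have $f(l)=0$, so every transition probability $m_1,\dots,m_4$ becomes independent of $l$ (and, under the standing assumption $p_i=p$, independent of $i$ as well). Since the self-loop $m_1$ leaves the state unchanged and cannot affect which state is first reached at level $L^*$, I would pass to the embedded jump chain in which, at each level, a move is \emph{horizontal} with conditional probability $m_2/(1-m_1)=\frac{(1-\gamma)p}{\gamma+(1-\gamma)(p+h)}$ -- exactly the base appearing in the theorem -- and otherwise \emph{advances} the level (via a diagonal or vertical transition) with the complementary probability. Consequently the number of horizontal moves made at each of the $L^*$ levels $0,\dots,L^*-1$ is an independent geometric variable whose tail decays like $\big(m_2/(1-m_1)\big)^{g}$, and the column index $i$ reached when $l$ first equals $L^*$ is the sum of these $L^*$ geometric counts plus the number of diagonal advances. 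The event $1-u(k^*,L^*)$ is precisely $\{i>k^*\}$ for this sum. I would then argue by pigeonhole that $i>k^*$ forces at least one of the $L^*$ levels to accumulate on the order of $(k^*-1)/L^*$ horizontal moves, and close the estimate with a union bound over the $L^*$ levels together with the geometric tail, yielding $1-u(k^*,L^*)\le L^*\big(\frac{(1-\gamma)p}{\gamma+(1-\gamma)(p+h)}\big)^{(k^*-1)/L^*}$ (using $1-(1-\gamma)(1-(p+h))=\gamma+(1-\gamma)(p+h)$, which matches the stated denominator).

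\textbf{Gluing the two phases.} For inequality (\ref{eqdelayed}) I would condition on $i_1$, the column reached at the first time $l=L^*$, granting the adversary the unlimited time budget used to define $u$ (which only helps it). Splitting on $\{i_1>k^*\}$ versus $\{i_1\le k^*\}$ and using a union bound, the first event is a subset of $\{i_1>k^*\}$ and so contributes at most $1-u(k^*,L^*)$. On the second event the adversary enters the post-$L^*$ dynamics at column $i_1\le k^*$ and still needs at least $k-k^*$ further progressive moves to reach $k$; by the Markov property this continuation is governed exactly by the shifted learning rate $f(\,\cdot+L^*)$, i.e.\ by the game defining $w'$. Since requiring $k-i_1\ge k-k^*$ moves is only harder than requiring $k-k^*$, and since the remaining budget $T_{bud}-t_1\le T_{bud}$ can be generously replaced by the full $T_{bud}$ (monotonicity of $w'$ in the budget), this event contributes at most $w'(k-k^*,T_{bud})$. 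Adding the two terms gives (\ref{eqdelayed}).

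\textbf{Main obstacle.} The delicate step is the tail estimate for $u$: one must convert the joint event $\{i>k^*$ when $l=L^*\}$ into a clean per-level statement, correctly accounting for the diagonal moves (which advance both coordinates simultaneously) so that only the horizontal base $\frac{(1-\gamma)p}{\gamma+(1-\gamma)(p+h)}$ survives in the exponent, and distributing the required column progress across the $L^*$ levels so as to land precisely on the exponent $(k^*-1)/L^*$. The phase-gluing step is comparatively routine once the Markov decomposition and the time-budget bookkeeping (unlimited time in phase one, the full budget in phase two) are set up carefully.
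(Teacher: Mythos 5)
Your two-phase plan is structurally identical to the paper's: the paper also splits at the first passage to level $L^*$, defines $u(k^*,L^*)$ with an unlimited phase-one time budget (precisely so that the first-passage column is well defined and any early win is absorbed into the event $\{i_1>k^*\}$), and obtains (\ref{eqdelayed}) by the same split-and-union-bound; your gluing paragraph is in fact spelled out more carefully than the paper's, which only asserts the bound $\epsilon+(1-\epsilon)w'(k-k^*,T_{bud})$. Where you genuinely differ is in how the tail bound on $1-u$ is derived. The paper does not argue on the jump chain directly: it substitutes $f(l)=0$ into the closed-form path sum of Theorem~\ref{closedw} (the unlimited budget makes the self-loop factor (\ref{p-c}) equal to $1$), lower-bounds the sum over horizontal counts $\sum_l g'_l\leq k^*-2-B$ by the box restriction $g'_l\leq (k^*-2-B)/L^*$, factorizes into truncated geometric series, exploits a cancellation of powers of $\frac{m_3+m_4}{1-m_1}$ against the sum over the diagonal indicators $b_l$, and finishes with $(1-x)^{L^*}\geq 1-L^*x$. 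Your embedded-chain reading — independent geometric horizontal counts per level, pigeonhole, union bound — is the probabilistic shadow of exactly that computation: the box restriction is your ``every level stays below threshold'' event and Bernoulli's inequality is your union bound. Your version buys a cleaner, first-principles argument that avoids the nested-sum bookkeeping; the paper's buys an exact bookkeeping of the diagonal moves through the parameter $B$.

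That bookkeeping is where your sketch, as stated, undershoots. Writing $i_1=\sum_{l=0}^{L^*-1}G_l+D$ with $D\leq L^*$ diagonal advances and applying the uniform pigeonhole gives some $G_l\geq (k^*+1-L^*)/L^*$, hence $1-u\leq L^*\rho^{(k^*+1-L^*)/L^*}$ with $\rho=\frac{(1-\gamma)p}{\gamma+(1-\gamma)(p+h)}$ (your identity $1-(1-\gamma)(1-(p+h))=\gamma+(1-\gamma)(p+h)$ is correct). This exponent falls short of the theorem's $(k^*-1)/L^*$ by $(L^*-2)/L^*$ whenever $L^*\geq 3$, so the naive version does not literally prove the stated inequality. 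To land on $(k^*-1)/L^*$ you must do what the paper's $B$-dependent box does: condition on the number $B\leq L^*-1$ of diagonal advances among the first $L^*-1$ level increments, allow each level at most $(k^*-2-B)/L^*$ horizontal moves, and use the strict geometric tail $P(G>M)=\rho^{M+1}$, whose ``$+1$'' exactly compensates the worst case $B=L^*-1$. Since the shortfall is less than $1$ in the exponent, i.e., at most a multiplicative factor $\rho^{-1}$, this is a constant-factor repair along the very lines you flagged as the delicate step — the approach is sound, but the sketch does not yet ``land precisely'' on the theorem's exponent without that refinement.
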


The lower bound on $u(k^*,L^*)$ can be further simplified by using $p\leq 1-h$ which implies 
$$  u(k^*,L^*) \geq 1- L^* [(1-\gamma)(1-h)]^{\frac{k^*-1}{L^*}}. $$




Since $1-u(k^*,L^*)$ is exponentially small in $k^*/L^*$, this shows that very likely $k^* = O(L^*)$. This makes common sense because initially, every adversarial move has a significant probability of success since the defender's knowledge of the attack is very limited in the beginning (the attack is still unknown to the defender, in other words, no valid attack signature is formed yet). 
If we achieve desired asymptotic (\ref{eq:as}) for $w'(k,T_{bud})$, then the resulting containment parameter for $w(k,T_{bud})$ (see Section \ref{AC-AC}) is $k_c=O(L^* + \log T_{bud})$. It is very important for the defender to keep $L^*$ small. A larger $L^*$ leads to a longer initial phase during which the adversary keeps on making progressive moves of the order of $O(L^*)$.  

\subsection{Capacity Region}

In this subsection, we show that as a result of our analysis we are able to prove statements like the one in the following definition which defines a `capacity region' of combinations of parameters defining the adversarial time budget, the probability of winning, and $k$ which characterizes what it means to be in a winning state. The definition describes what was previously called `desired asymptotics' since a {\em non-empty} capacity region shows that an exponentially small probability of winning $O(2^{-s})$  is achieved if $k=O(s + \log T_{bud})$ implying that the probability of winning is $poly(T_{bud})negl(k)$.


\begin{definition} Suppose there exist vectors $\boldsymbol{\delta}$, $\boldsymbol{\mu}$, $\boldsymbol{\xi}$, and the all-one vector ${\bf 1}$, such that, for all $k\geq 0$ (characterizing a winning state for the adversary) and for all $s\geq 0$ and $t\geq 0$ satisfying
$$ s\boldsymbol{\delta} + t \boldsymbol{\mu} \leq k {\bf 1} + \boldsymbol{\xi},$$
 we have the following security guarantee: If the attacker has a time budget $\leq 2^t$, then his probability of reaching a winning state is at most $\leq 2^{-s}$. 
 
 We say that $(\boldsymbol{\delta}, \boldsymbol{\mu},  \boldsymbol{\xi})$ describes a capacity region for $\{(s,t) \ : \ s,t\geq 0\}$. Its definition implies that for a fixed $k$ and time budget $2^t$, the probability of winning is at most $\leq 2^{-s(k,t)}$, where
 \begin{equation} s(k,t)= \max\{ s\geq 0 \ : \  s\boldsymbol{\delta} + t \boldsymbol{\mu} \leq k {\bf 1} + \boldsymbol{\xi} \}. \label{eqcapreg} \end{equation}
 Similarly, for a  time budget $2^t$, the probability of winning is $\leq 2^{-s}$ if  a winning state is characterized by some $k\geq k(s,t)$, where
 \begin{equation} k(s,t)= \min\{ k\geq 0 \ : \  s\boldsymbol{\delta} + t \boldsymbol{\mu} \leq k {\bf 1} + \boldsymbol{\xi} \}. \label{eqcapreg2} \end{equation}
 \end{definition}
 
This definition generalizes the definition of security capacity used in MTD games \cite{saeed2016markov} where only a single equation of the form $s+t\leq \delta \cdot k$ needs to be satisfied and where $\delta$ is defined as the ``security capacity''. The definition of capacity region may be of interest for other more general defender-adversarial games where a Markov model is used to characterize defender and adversarial moves.\footnote{A more general definition which captures  winning states  characterized by parameter vectors ${\bf k}$, rather than a single $k$, can also be formulated.}


The definition of capacity region is of interest for a couple of reasons. First, it clearly describes the limitation of the adversary: 
The capacity region explicitly characterizes the probability of winning as a function of $t$ and $k$, see (\ref{eqcapreg}). In practice, the attacker's time budget is $\leq 2^t$, limited to some ``small'' $t$, e.g., $t=80, 128$, and using such a fixed $t$ makes the probability of winning only a function of $k$. 

Second, the capacity region can also be used to compute the containment parameter $k_c$, defined as the solution  $k(s,t)$ in (\ref{eqcapreg2}) where $s$ is set to $s=1$ which corresponds to a probability of winning equal to $1/2$. This allows us to understand how 
$$k_c=k(1,t)=O(t)=O(\ln T_{bud})$$ depends on the parameters that describe the overall Markov model.

Third, the definition allows us to trivially compose capacity regions of Markov models corresponding to different learning rates. For instance, in practice, an attacker may use several exploits or attack vectors and use each attack vector to advance its footprint and increase the number of compromised nodes. For each attack vector $i$, the defender develops a learning rate $f_i(l)$. In essence, the defender plays individual games with each attack vector, and the adversary wins the overall game if the individual games lead to $k_i$ compromised nodes such that $k=\sum_i k_i$. The following theorem makes this argument precise.

\begin{theorem} Let $f_i(l)$, $1\leq i\leq a$, be learning rates corresponding to Markov models describing different defender-adversary games that are played simultaneously. Let  $k_i(s,t)$ for the $i$-th game  be defined as in (\ref{eqcapreg2}). Let $T_{bud}\leq 2^t$ be the overall time budget of the adversary. Then the probability of reaching a state with $k_i(s+\ln a,t)$ compromised nodes in the $i$-th game is at most $\leq 2^{-(s+\ln a)}=2^{-s}/a$. Since the transition probabilities in each of the Markov models are independent from the number of compromised nodes, the probability of reaching $k=\sum_{i=1}^a k_i(s+\ln a,t)$ is at most
$$\leq 1- \prod_{i=1}^a (1-2^{-s}/a)\leq 2^{-s}.$$
\end{theorem}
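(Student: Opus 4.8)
The plan is to derive the overall guarantee from the single-game guarantee via a union bound, paying for the $a$ simultaneous games by boosting the security parameter from $s$ to $s+\ln a$. First I would invoke the per-game capacity-region guarantee (the consequence of the definition encoded in (\ref{eqcapreg2})) separately for each of the $a$ Markov models. For game $i$, applied with security parameter $s+\ln a$ and the common budget bound $2^t\geq T_{bud}$, the definition of $k_i(s+\ln a,t)$ yields that the probability of game $i$ reaching a state with at least $k_i(s+\ln a,t)$ compromised nodes is at most $2^{-(s+\ln a)}=2^{-s}/a$. The additive $\ln a$ in the security parameter is precisely what converts into the multiplicative factor $1/a$ on the per-game failure probability, so that after summing over the $a$ games the total budget $2^{-s}$ is recovered.

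Next I would reduce the global objective to the individual ones. Writing $k=\sum_{i=1}^a k_i(s+\ln a,t)$ and letting $X_i$ denote the number of compromised nodes in the $i$-th game, a pigeonhole argument shows that the event $\{\sum_i X_i\geq k\}$ is contained in the event $\{\exists\, i:\ X_i\geq k_i(s+\ln a,t)\}$: if instead $X_i\leq k_i(s+\ln a,t)-1$ held for every $i$, then the total would be at most $k-a<k$. Hence it suffices to upper bound the probability that at least one game reaches its individual threshold.

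The delicate step, and the one I expect to be the main obstacle, is the independence claim used to pass from the union event to the product $\prod_i(1-2^{-s}/a)$. The $a$ games share a single budget $T_{bud}\leq 2^t$ that the adversary may adaptively allocate, so the per-game processes are not obviously independent. I would remove this coupling by over-provisioning: give each game its own full budget $2^t$ and run it as an independent copy of the corresponding Markov chain. This only increases each game's chance of reaching its threshold, hence only increases the probability that some game succeeds, so the resulting bound remains a valid upper bound on the true shared-budget winning probability. Under this decoupling the games are genuinely independent (each Markov model's transition probabilities depend only on its own state $(i,l)$, not on the number of nodes compromised in the other games), each still reaches its threshold with probability at most $2^{-s}/a$ by the per-game guarantee, which holds for any budget $\leq 2^t$; therefore the probability that no game reaches its threshold is at least $\prod_{i=1}^a(1-2^{-s}/a)$.

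Finally I would assemble the pieces. Combining the event containment with the independence bound gives
$$\Pr\Big[\sum_{i=1}^a X_i\geq k\Big]\leq 1-\prod_{i=1}^a\left(1-\frac{2^{-s}}{a}\right),$$
and Bernoulli's inequality $(1-x)^a\geq 1-ax$ with $x=2^{-s}/a$ yields $\prod_{i=1}^a(1-2^{-s}/a)\geq 1-2^{-s}$, so the right-hand side is at most $2^{-s}$, which is the claimed bound.
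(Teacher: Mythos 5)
Your proof is correct and follows essentially the same route as the paper, whose proof is embedded inline in the theorem statement itself: invoke the per-game capacity-region guarantee (\ref{eqcapreg2}) with boosted security parameter $s+\ln a$ to get per-game failure probability $2^{-s}/a$, multiply the complements using independence across the $a$ games, and conclude via $1-\prod_{i=1}^a(1-2^{-s}/a)\leq 2^{-s}$ exactly as your Bernoulli step does. The two details you add --- the pigeonhole containment $\{\sum_i X_i\geq k\}\subseteq\{\exists\, i:\ X_i\geq k_i(s+\ln a,t)\}$ and the over-provisioning coupling that replaces the shared, possibly adaptively allocated budget $T_{bud}\leq 2^t$ with independent full-budget copies --- are rigorous fillings of steps the paper dispatches in a single sentence by asserting that the transition probabilities of the individual Markov models are independent of the number of compromised nodes, so your proposal is the paper's argument made careful rather than a different one.
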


Notice that for $a$ simultaneous adversarial games an exponentially small probability of winning $O(2^{-s})$  is achieved for $k=O(s + \ln a + \ln T_{bud})$. In this parallel game, if the time budget is measured in real time $T$, then (as before) $T_{bud}\leq Tk/\Delta$ for  $k=\sum_{i=1}^a k_i(s+\ln a,t)$ and we require $T\leq \frac{\Delta}{k} 2^{t}$. As a final note,  $k=\sum_{i=1}^a k_i(s+\ln a,t)$ is dominated by  $\max_{i=1}^a k_i(s+\ln a,t) \geq k/a$. In other words, the attacker keeps on searching for an attack vector which can be used to achieve a large fraction\footnote{Due to the defender either starting learning after a too long initial period or, once learning starts,  not learning fast enough (in order to attain the desired asymptotic).} of the desired $k$, and the defender tries to learn how to recognize attack vectors as fast as possible in order to contain these sufficiently.



As an example of  a capacity region, we translate Theorem \ref{Theo.wbarUB} for $1-f(l) \leq \frac{d}{(l+2)^2}$ in terms of a capacity region: 

\begin{corollary} \label{crcor}
For learning rate  $1-f(l) \leq \frac{d}{(l+2)^2}$ we have a capacity region $(\boldsymbol{\delta}, \boldsymbol{\mu},  \boldsymbol{\xi})$ defined by 2-dimensional vectors
\begin{eqnarray*}
\boldsymbol{\delta} &=& (0, q \ln 2), \mbox{ where } q = \left[ \ln(\frac{\gamma}{pd(1-\gamma)})\right]^{-1}, \\
\boldsymbol{\mu} &=& (\frac{\gamma \ln 2}{1-\gamma}, q (\ln 2)(1+\frac{1}{1-\gamma})), \\
\boldsymbol{\xi} &=& (-\frac{\gamma}{1-\gamma}, q(-\frac{1}{1-\gamma} + \ln \frac{1}{1-\gamma}).
\end{eqnarray*}
\end{corollary}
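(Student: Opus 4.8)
The plan is to recognize that the vector inequality $s\boldsymbol{\delta} + t\boldsymbol{\mu} \leq k{\bf 1} + \boldsymbol{\xi}$ is a repackaging of the two scalar conditions appearing in Theorem~\ref{Theo.wbarUB} (specialized to $1-f(l)\leq d/(l+2)^2$), namely its hypothesis $T_{bud}\leq e^{-1+k(1-\gamma)/\gamma}$ and its conclusion $\bar{w}\leq 2^{-s}$, once we set $T_{bud}=2^t$. First I would reduce to $\bar{w}$: by the upper bound of Theorem~\ref{Theo.wbar} we have $w(k,T_{bud})\leq \bar{w}(k,T_{bud})$, so any bound of the form $\bar{w}\leq 2^{-s}$ immediately gives the desired security guarantee. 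Since $\bar{w}(k,T_{bud})$ is monotone in the budget, for an adversary limited to time $\leq 2^t$ it suffices to analyze the worst case $T_{bud}=2^t$, which is why the capacity region is parametrized by $t$ with $T_{bud}=2^t$.

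The first component of the vector inequality captures the hypothesis of Theorem~\ref{Theo.wbarUB}. Substituting $T_{bud}=2^t$ into $T_{bud}\leq e^{-1+k(1-\gamma)/\gamma}$ and taking natural logarithms yields $t\ln 2 + 1 \leq k(1-\gamma)/\gamma$; multiplying by $\gamma/(1-\gamma)$ rearranges this to $\frac{\gamma\ln 2}{1-\gamma}\,t \leq k - \frac{\gamma}{1-\gamma}$, which is exactly $s\delta_1 + t\mu_1 \leq k + \xi_1$ for $\delta_1=0$, $\mu_1=\frac{\gamma\ln 2}{1-\gamma}$, $\xi_1=-\frac{\gamma}{1-\gamma}$, matching the first coordinates of the stated vectors.

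The second component captures the conclusion. I would take $\log_2$ of the theorem's bound $\bar{w}\leq \frac{e^{1/(1-\gamma)}}{1-\gamma}\,T_{bud}^{\,1+1/(1-\gamma)}\,[\frac{1-\gamma}{\gamma}pd]^{k}$. The crucial simplification is that the definition of $q$ gives $\frac{1-\gamma}{\gamma}pd = e^{-1/q}$, so the $k$-factor contributes $-k/(q\ln 2)$ to $\log_2\bar{w}$; the budget factor becomes $(1+\frac{1}{1-\gamma})t$ after $T_{bud}=2^t$, and the leading factor becomes the constant $\log_2\frac{e^{1/(1-\gamma)}}{1-\gamma}$. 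Imposing $\log_2\bar{w}\leq -s$ and multiplying through by $q\ln 2$ (legitimate and direction-preserving precisely because $d<\gamma/(p(1-\gamma))$ forces $q>0$) produces $q(\ln 2)s + q(\ln 2)(1+\frac{1}{1-\gamma})t \leq k - (q\ln 2)\log_2\frac{e^{1/(1-\gamma)}}{1-\gamma}$, i.e. $s\delta_2 + t\mu_2 \leq k + \xi_2$ with $\delta_2=q\ln 2$ and $\mu_2=q(\ln 2)(1+\frac{1}{1-\gamma})$.

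I expect the main obstacle to be the bookkeeping for $\xi_2$: one must expand $\log_2\frac{e^{1/(1-\gamma)}}{1-\gamma}$ into its $\frac{1}{(1-\gamma)\ln 2}$ and $\frac{1}{\ln 2}\ln\frac{1}{1-\gamma}$ pieces and track which survive multiplication by $q\ln 2$, so as to recover the $\frac{1}{1-\gamma}$ and $\ln\frac{1}{1-\gamma}$ contributions to the second coordinate of $\boldsymbol{\xi}$; the positivity $q>0$ (equivalently $d<\gamma/(p(1-\gamma))$) is what both keeps every inequality direction intact and guarantees the region is non-empty. Once both scalar conditions are shown to hold simultaneously exactly when the vector condition does, the chain $w\leq\bar{w}\leq 2^{-s}$ shows that $(\boldsymbol{\delta},\boldsymbol{\mu},\boldsymbol{\xi})$ is a valid capacity region in the sense of the capacity-region definition, which finishes the argument.
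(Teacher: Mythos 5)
Your proposal is correct and follows essentially the same route as the paper's own proof: the paper likewise substitutes $T_{bud}\leq 2^t$ into the implication of Theorem~\ref{Theo.wbarUB}, splits it into the two scalar conditions (hypothesis and conclusion), takes logarithms under the assumption $pd(1-\gamma)/\gamma<1$ (your $q>0$), and reorders terms into the componentwise vector form, with the reduction $w\leq\bar{w}$ from Theorem~\ref{Theo.wbar} left implicit. One remark on the step you flagged as the main obstacle: carrying out the $\xi_2$ bookkeeping carefully gives $\xi_2=-q\left(\frac{1}{1-\gamma}+\ln\frac{1}{1-\gamma}\right)$, since $\ln\frac{e^{1/(1-\gamma)}}{1-\gamma}=\frac{1}{1-\gamma}+\ln\frac{1}{1-\gamma}$, so the plus sign in front of $\ln\frac{1}{1-\gamma}$ in the corollary's (and the paper's) second coordinate of $\boldsymbol{\xi}$ appears to be a sign typo that a careful execution of your plan would surface.
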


As a second example of working with capacity regions, we translate (\ref{eqdelayed}) in Theorem \ref{Theo.delay} for delayed learning:
 
\begin{corollary} \label{lemdel} Suppose $f(l)=0$ for $l<L^*$ and let $(\boldsymbol{\delta}', \boldsymbol{\mu}',  \boldsymbol{\xi}')$ be the capacity region corresponding to learning rate $f(l+L^*)$ as a function of $l$. Suppose that
$$z= \left[ \ln(\frac{1-(1-\gamma)(1-(p+h))}{(1-\gamma)p})\right]^{-1}\geq 0.$$
Then, first,
 for all $s\geq \max\{0, 1/{z\ln 2} - {\ln (2L^*)}{\ln 2} \}$ and $t\geq 0$ satisfying
$$ s[\boldsymbol{\delta}'+  L^* (\ln 2) z {\bf 1}]
 + t \boldsymbol{\mu}' \leq k {\bf 1} + [\boldsymbol{\xi}' -\boldsymbol{\delta}' + ( L^* (\ln 2L^*) z -1){\bf 1}] $$
and, second, for all\footnote{Notice that  $1/{z\ln 2} - {\ln (2L^*)}{\ln 2}\geq 0$ if and only if $L^*\leq \frac{1}{2}e^{1/z}= \frac{1-(1-\gamma)(1-(p+h))}{2(1-\gamma)p}$, which is large  for small $(1-\gamma)p$.} $0\leq s\leq  1/{z\ln 2} - {\ln (2L^*)}{\ln 2}$ and $t\geq 0$ satisfying
$$ s\boldsymbol{\delta}' + t \boldsymbol{\mu}' \leq k {\bf 1} + [\boldsymbol{\xi}' -\boldsymbol{\delta}' - (L^*+1){\bf 1}],$$
 we have the following security guarantee: If the attacker has a time budget $\leq 2^t$, then his probability of reaching a winning state is at most $\leq 2^{-s}$. 
\end{corollary}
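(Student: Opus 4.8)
The plan is to start from the decomposition established in Theorem~\ref{Theo.delay},
$$ w(k,T_{bud}) \leq (1 - u(k^*,L^*)) + w'(k-k^*,T_{bud}), $$
which holds for every admissible choice $k^*\geq L^*+1$, and to split the target failure probability as $2^{-s}=2^{-(s+1)}+2^{-(s+1)}$. Concretely, I would force the ``initial phase'' term $1-u(k^*,L^*)$ below $2^{-(s+1)}$ by taking $k^*$ large enough, and then invoke the capacity region $(\boldsymbol{\delta}',\boldsymbol{\mu}',\boldsymbol{\xi}')$ of the post-phase game (which by definition governs learning rate $f(l+L^*)$, exactly the $w'$ of Theorem~\ref{Theo.delay}) on the residual objective $k-k^*$ with slack parameter $s+1$, so that $w'(k-k^*,T_{bud})\leq 2^{-(s+1)}$ whenever the budget is $\leq 2^t$. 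Both displayed inequalities are then nothing but the translation of the capacity-region premise $(s+1)\boldsymbol{\delta}'+t\boldsymbol{\mu}'\leq(k-k^*){\bf 1}+\boldsymbol{\xi}'$ back into a condition on $(s,t,k)$ after $k^*$ has been pinned down.

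For the initial phase I would combine the lower bound on $u(k^*,L^*)$ from Theorem~\ref{Theo.delay} with the definition of $z$, observing that the base of the exponential there equals $e^{-1/z}$, so that
$$ 1-u(k^*,L^*)\leq L^*\, e^{-(k^*-1)/(zL^*)}. $$
Imposing $L^*\,e^{-(k^*-1)/(zL^*)}\leq 2^{-(s+1)}$ and solving for $k^*$ gives $k^*\geq 1 + zL^*\big(s\ln 2 + \ln(2L^*)\big)$, and I would take $k^*$ equal to this value. The admissibility constraint $k^*\geq L^*+1$ then reads $z\big(s\ln 2+\ln(2L^*)\big)\geq 1$, i.e. $s\geq \tfrac{1}{z\ln 2}-\tfrac{\ln(2L^*)}{\ln 2}$, which is precisely the threshold separating the two regimes in the statement (and matches the footnote's condition $L^*\leq \tfrac12 e^{1/z}$).

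In the regime $s\geq \tfrac{1}{z\ln 2}-\tfrac{\ln(2L^*)}{\ln 2}$ I would substitute this $k^*$ into $(s+1)\boldsymbol{\delta}'+t\boldsymbol{\mu}'\leq(k-k^*){\bf 1}+\boldsymbol{\xi}'$, carry the $-k^*{\bf 1}$ across, and collect the coefficient of $s$: the $s$-dependent part of $k^*$ contributes the extra term $L^*(\ln 2)z\,{\bf 1}$ adjoined to $\boldsymbol{\delta}'$, while the constant part of $k^*$ together with the $\boldsymbol{\delta}'$ coming from rewriting $(s+1)\boldsymbol{\delta}'=s\boldsymbol{\delta}'+\boldsymbol{\delta}'$ produces the shifted offset vector, yielding the first displayed condition. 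In the complementary regime $0\leq s\leq \tfrac{1}{z\ln 2}-\tfrac{\ln(2L^*)}{\ln 2}$ the minimal admissible choice $k^*=L^*+1$ already suffices, because there $1-u(L^*+1,L^*)\leq L^*e^{-1/z}\leq 2^{-(s+1)}$ is exactly the inequality defining this range; substituting $k^*=L^*+1$ into the same capacity-region premise produces the second displayed condition with offset $\boldsymbol{\xi}'-\boldsymbol{\delta}'-(L^*+1){\bf 1}$. In either case $w(k,T_{bud})\leq 2^{-(s+1)}+2^{-(s+1)}=2^{-s}$.

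The main obstacle is the bookkeeping at the interface of the two regimes rather than any single hard estimate: one must check that the chosen $k^*$ is always admissible, that the switch from the variable $k^*$ to the fixed $k^*=L^*+1$ occurs exactly where the minimal admissible $k^*$ ceases to drive $1-u$ below $2^{-(s+1)}$, and that the two translated offset vectors coincide at $s=\tfrac{1}{z\ln 2}-\tfrac{\ln(2L^*)}{\ln 2}$ (a worthwhile consistency check, since the constant and $s$-linear pieces of $k^*$ must align with the stated $\boldsymbol{\xi}'$-corrections). The remaining steps are routine: the base-$e$ to base-$2$ conversion, the linear rearrangement of the capacity-region inequality, and the final union-type summation.
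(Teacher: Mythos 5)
Your proposal is correct and is essentially the paper's own proof: the paper likewise splits the failure probability as $2^{-(s+1)}+2^{-(s+1)}$, bounds the initial phase via $1-u(k^*,L^*)\leq L^*e^{-(k^*-1)/(zL^*)}$, chooses $k^*=\hat{k}+L^*+1$ with $\hat{k}=(s+1)L^*(\ln 2)z-L^*[1-(\ln L^*)z]$ --- identical to your $k^*=1+zL^*\bigl(s\ln 2+\ln(2L^*)\bigr)$ --- whenever this is admissible, and takes $\hat{k}=0$ (your $k^*=L^*+1$) in the complementary regime, with the switch occurring exactly at $s=\frac{1}{z\ln 2}-\frac{\ln(2L^*)}{\ln 2}$. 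Your proposed boundary consistency check is in fact instructive: carrying out the regime-one substitution carefully gives the offset $\boldsymbol{\xi}'-\boldsymbol{\delta}'+\bigl(-L^*(\ln 2L^*)z-1\bigr){\bf 1}$, i.e.\ with a \emph{minus} sign on $L^*(\ln 2L^*)z$ (the paper's final substitution line, and hence the printed statement, carries a sign slip here), and it is precisely with this minus sign that the two displayed conditions coincide at the boundary value of $s$, as your check would reveal.
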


The corollary confirms that the containment parameter for delayed learning (for both cases) is 
$k_c = O(L^* + \log T_{bud})$.


\subsection{Parametric analysis of Learning Rate $f(l)=1-\frac{d}{(l+2)^a}$}\label{AC-Param}

In this subsection, we use Algorithm \ref{alg:wbar} to evaluate $\bar{w}(k,T_{bud})$ based on different learning rates and $T_{bud}=10,000$ logical timesteps. Fig. \ref{fig.wbar} depicts $\bar{w}(k,T_{bud})$ as a function of $k$ for several $\gamma$ and  several learning rates of the form $\alpha(l)= (l+1)^{-a}$. As it can be seen from this plot, a proper defense mechanism (with sufficiently fast converging learning rates and large enough $\gamma$ values) substantially decreases the adversary's chances of making progress in the game. 
 
 Fig. \ref{fig.kc} depicts how the containment parameter $k_c$ (computed using $\bar{w}(k,T_{bud})$) depends on $\alpha(l)= (l+1)^{-a}$. For $a\geq 0.9$, $k_c\leq 20$ is very small making it impossible for the adversary to reach high-value attack objectives (e.g., total number of infected nodes in a malware propagation game).
 
 \begin{figure*}
 	\begin{center}
 		\scalebox{0.65}
 		{\includegraphics{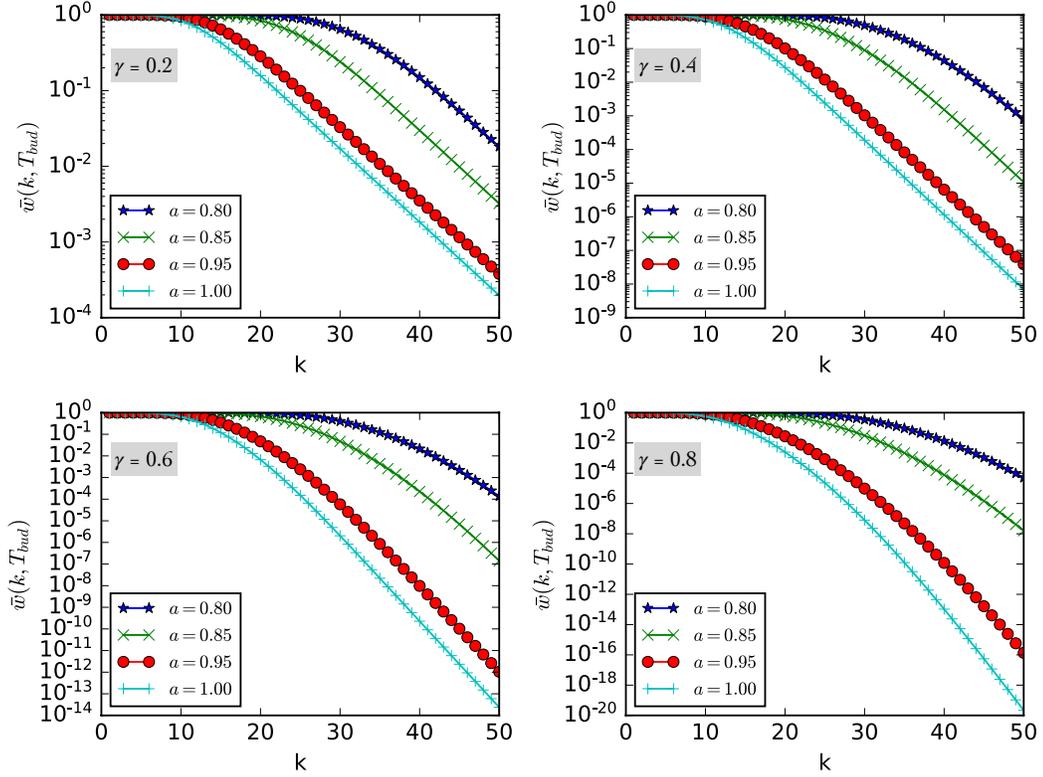}}
 		\caption{$\bar{w}(k,T_{bud})$ based on different values of $\gamma$ and learning functions $\alpha(l) = (l+1)^{-a}$ where $T_{bud} = 10,000$ logical timesteps}
 		\label{fig.wbar}
 	\end{center}
 \end{figure*}
 
 \begin{figure}
 	\begin{center}
 		\scalebox{0.43}
 		{\includegraphics{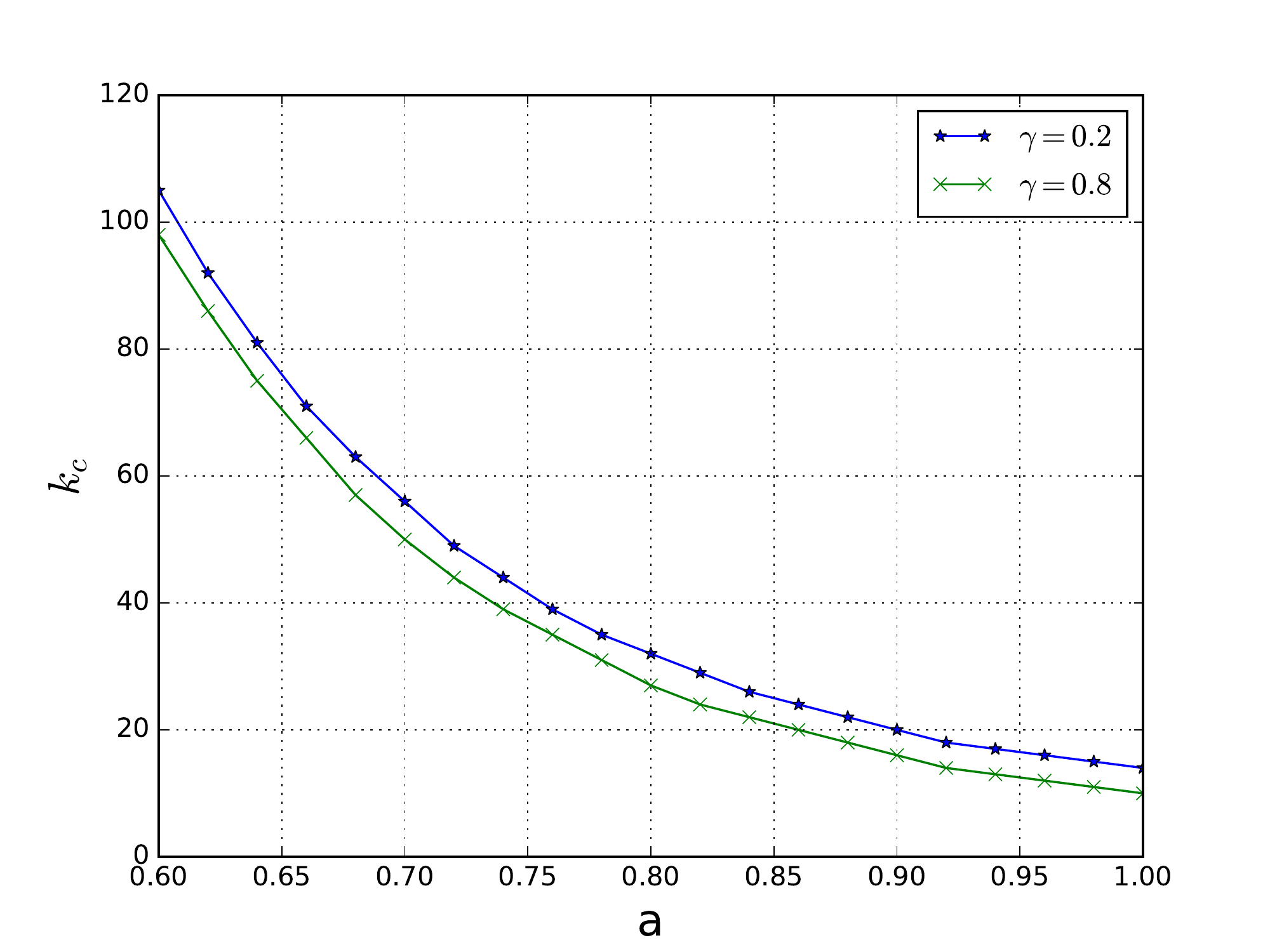}}
 		\caption{$k_c$ for different classes of learning functions $\alpha(l) = (l+1)^{-a}$ where $T_{bud} = 10,000$ logical timesteps}
 		\label{fig.kc}
 	\end{center}
 \end{figure}
 
 Fig. \ref{fig.roleOfp} depicts the chances of attack prosperity in two different cases $p=0.1$, and $p =1.0$ (see (\ref{eq-alpha})) while $\gamma = 0.5$ meaning that half of the adversarial moves will be observed by the defender at the network levels (i.e., will be marked as suspicious on the flow classifier). 
\begin{figure}
 	\begin{center}
 		\scalebox{0.43}
 		{\includegraphics{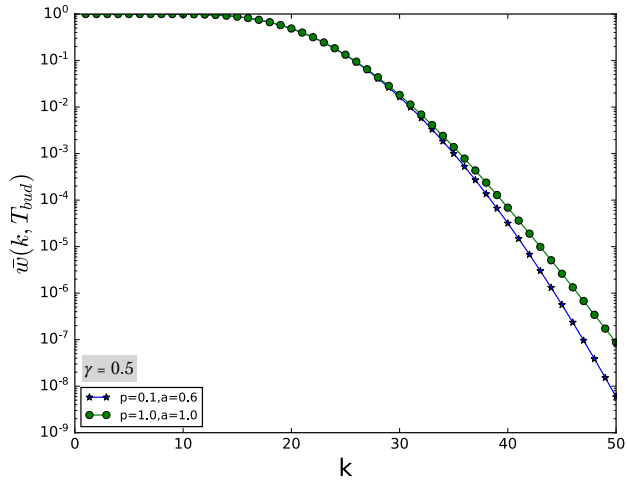}}
 		\caption{$\bar{w}(k,10000)$ for two cases $p=0.1$, and $p=1.0$ with $\gamma=0.5$ and learning function $1-f(l) = (l+1)^{-a}$ where $a=0.6$ and $a=1$ respectively.}
 		\label{fig.roleOfp}
 	\end{center}
 \end{figure}
 
 The impact of delayed learning on the adversarial containment parameter $k_c$  is shown in Fig. \ref{fig.delayed} for two different cases. In the first scenario, the learning starts immediately (with no delay, i.e.,  $L^{*}=0$), and in the second case, the learning commences after missing the first 100 attack samples (i.e., $L^*=100$). 
 
 \begin{figure}
 	\begin{center}
 		\scalebox{0.43}
 		{\includegraphics{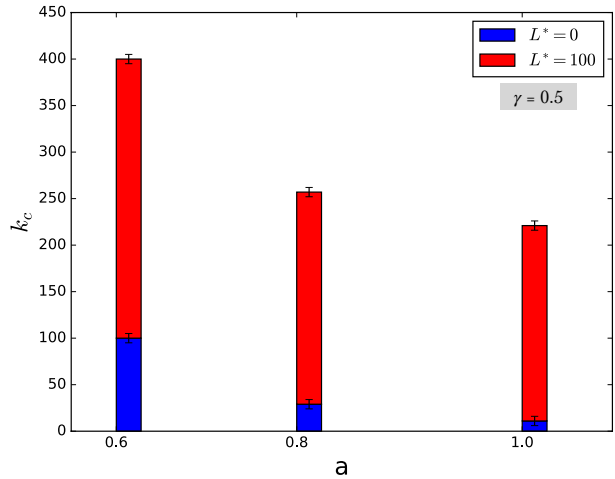}}
 		\caption{The impact of delayed learning on $k_c$ for different classes of learning functions $\alpha(l) = (l+1)^{-a}$ where $T_{bud} = 10,000$ logical timesteps}
 		\label{fig.delayed}
 	\end{center}
 \end{figure}

\section{Case Studies}\label{CaseStdy}
In this section, we walk through two examples of attack-defense games as case studies to show how real-world cyber attack scenarios can be translated into our presented framework. Their security analsyis is worked out in the appendix.

\subsection{Malware propagation and Botnet construction} \label{CS1}

 Consider a statically addressed network of size $\mathcal{N}$ in which a fraction of the hosts $\mathcal{K}$ suffers from a zero-day vulnerability only known to an adversary. The attacker's objective is to locate such hosts in the network, based on a target discovery strategy $\sigma$ (e.g., a random scan scheme, or a hitlist) associated with a probability $p_i$, and infect them during the Window of Vulnerability (WoV) time as the number of vulnerable systems is not yet shrunken to insignificance and the attacker’s exploit is useful in this period. The game starts with one infected machine as the ``Patient Zero''. The attacker desires to take control of at least $k$ out of $\mathcal{K}$ vulnerable hosts, meaning that the winning state for the adversary is defined as $(i,l)=(k,*)$. 
In the meantime, the defender’s objective is to generate an attack signature to be able to filter next adversarial attack traffic (see Fig. \ref{fig.ids_scenario}). The defender’s knowledge of the attack can be increased in two manners: (1) if an adversary agent hits a honeypot with probability $h$ (assuming that there exist in total $\mathcal{H}$ honeypots in the environment, therefore, $h =\mathcal{H}/\mathcal{N}$ for a memoryless blind scan strategy), or (2) if the attack traffic is correctly labeled as suspicious on the network (classifier) level with probability $\gamma$ \footnote{This is a typical architecture in automatic signature generation schemes (e.g., see \cite{kim2004autograph,newsome2005polygraph,li2006hamsa,tang2011signature}).}. In either of these cases, the defender’s detection rate $f(l)$ in which $l$ is the number of so far collected attack samples will be enhanced. This function represents the probability that given $l$ samples so far, the system detects a new incoming malicious packet and filters it on the fly. We also consider a skillful attacker (for instance taking advantage of a polymorphic worm) as defined in previous sections, and therefore, having only a few attack samples is not enough for signature generation.

\begin{algorithm}
	\caption{Malware propagation game simulator}
	\label{alg:sim}
	\begin{algorithmic}[1]
		\Function{Simulator}{$\mathcal{N,K,H},k,\sigma,\varepsilon,\gamma,f(.),\vartheta$}
		\State {$\mathcal{I} = \emptyset$; $l = 0$;}
		\State {droppingOut = False};
		\State {winningState = False};
		\While {not droppingOut and not winningState}
	\State {targetHost $\xleftarrow[\text{}]{p_i} probe(\mathcal{N},\sigma)$};
		\State {filtered,sampled $\xleftarrow[\text{}]{f(l),\gamma}$$transmit$($\varepsilon$,targetHost)};
		\If {(not filtered and targetHost $\in \mathcal{H}$) or sampled} \label{alg.def_start}
		\State{$l\pluseq 1$};
		\State {update $f(l)$};
		\If {$f(l) \geq 1-\vartheta$}
		\State {Output \textit{"Game is finished. Attacker dropped out! Defender won!"}};
		\State {droppingOut = True};
		\EndIf 
		\EndIf \label{alg.def_end}
		\If {targetHost $\in \mathcal{K}$ and not filtered}
		\State $\mathcal{I} = \mathcal{I}~ \cup$ targetHost;
		\If {$|\mathcal{I}| \geq k$}
		\State {Output \textit{"Game is finished. Adversary won!"}};
		\State {winningState = True};
		\EndIf 
		\EndIf
		\EndWhile
		\EndFunction
	\end{algorithmic}
\end{algorithm}

The system state $(i,l)$ represents the total number of infected machines and the total number of captured attack traffic (i.e., malware samples) so far by the defense system. The attacker's view of the system state will change if an agent's effort in infecting a new node is successful (or if it possibly loses its control over an agent).
The defender's view gets updated as a consequence of discovering an adversarial move (i.e., an attack/malware sample). 
Algorithm \ref{alg:sim} shows the above description of the game in which the game simulator takes $\mathcal{N,K,H},k$, and an adversarial target discovery strategy $\sigma$ and an exploit $\varepsilon$, in addition to the detection probability $f(l)$, the sampling rate $\gamma$, and an acceptable ``threshold'' $1-\vartheta$ as the input and outputs if the attacker wins the game. The termination rule is whether the attacker compromises its desired number of hosts $k$ or if the attacker decides not to play anymore\footnote{For instance, the attacker concludes that not enough gain can be made in a reasonable time because of not being able to make a progressive move in the game due to defender's high-value detection rates that keep on improving.}, that is $f(l) \geq 1-\vartheta$. Note that the transmit method returns a tuple, i.e., if the transmitted packet by the attacker got filtered by IDPS or got sampled at the flow classifier level.

\subsection{A Moving Target Defense Game}  \label{CS2}
As another example, we consider the ``Multiple-Target Hiding'' (MTH) game introduced in \cite{saeed2016markov} with minor modifications to the game. In the MTH game studied in \cite{saeed2016markov}, the adversary is interacting with a probabilistic defender taking advantage of a moving target defense strategy in which it reallocates/shuffles its resources at each time step of the game with some probability $\lambda$ (for instance, consider an IP hopping strategy). In order to win the game, the adversary needs to locate $k$ out of $\mathcal{K}$ sensitive targets/resources distributed in the environment while there exist in total $\mathcal{N}$ ``locations''. The defender moves by reallocating a target causing the attacker to redo its search for the locations. The attacker moves by selecting one of the $\mathcal{N}$ possible locations and examining whether it corresponds to one of the targets or not. Notice that MTD strategies are usually costly for the defender as they have an immediate impact on the availability of the resources and system performance. Therefore, instead of a randomized defense strategy (i.e., a probabilistic move at each time step), we consider that the defender issues a move if the number of observed adversarial actions reaches a threshold $L^*$ based on which an attack detection signal is being generated. 

\begin{algorithm}
	\caption{MTD  game simulator}
	\label{alg:MTDsim}
	\begin{algorithmic}[1]
		\Function{Simulator}{$\mathcal{N,K,H},k,\sigma,\gamma,L^*$}
		\State {$\mathcal{I} = \emptyset$; $l = 0$;}
		\State {reallocateResources = False};
		\State {winningState = False};
		\While {not reallocateResources and not winningState}
		\State {observed,targetHost $\xleftarrow[\text{}]{(\gamma,h),p_i} locate(\mathcal{N},\sigma)$};
		\If {observed} \label{alg:MTDsim.def_observe}
		\State{$l\pluseq 1$};
		\If {$l \geq L^*$}
		\State {reallocateResources = True};
		\State {$\mathcal{I} = \emptyset$};
		\State {Output \textit{"Game is finished. Defender won!"}};
		\EndIf 
		\EndIf \label{alg:MTDsim.def_end}
		\If {targetHost $\in \mathcal{K}$}
		\State $\mathcal{I} = \mathcal{I}~ \cup$ targetHost;
		\If {$|\mathcal{I}| \geq k$}
		\State {winningState = True};
		\State {Output \textit{"Game is finished. Adversary won!"}};
		\EndIf 
		\EndIf
		\EndWhile
		\EndFunction
	\end{algorithmic}
\end{algorithm}

The system state $(i,l)$ represents the total number of located target machines by the attacker and the total number of recognized adversarial moves so far by the defender. The attacker's view of the system state will change if it successfully finds a sensitive target based on its target discovery strategy $\sigma$, while the defender's view gets updated as a consequence of discovering an adversarial move (whether on network levels with probability $\gamma$ or on host levels with probability $h$ via a defense agent). 
Algorithm \ref{alg:sim} shows the above description of the game in which the game simulator takes $\mathcal{N,K,H},k$, and an adversarial target discovery strategy $\sigma$, in addition to the sampling rate $\gamma$, and a ``threshold'' $L^*$ as the input and outputs if the attacker wins the game. The termination rule is whether the attacker finds its desired number of sensitive targets $k$ or if the defender reallocates its resources as it discovers enough amount of attack evidence.

%

\section{Concluding remarks and future directions} \label{sec:con} 
This presented work is an attempt at constructing a theory of security and developing a general framework for modeling cyber attacks prevalent today, including opportunistic, targeted and multi-stage attacks while taking practical constraints and observations into consideration. 
To this end, we have modeled the interactions of an adversary (and possibly its agents) with a defensive system during the lifecycle of an attack as an incremental online learning game. In comparison with the available works in this area, which are too simple, specific and static to be used in almost any practical situation, our presented framework, to the best of our knowledge, is the most comprehensive and realistic one which can be used to represent the dynamic interplay between the attacker and the defender. Unlike most of the available research in this area which ignores entirely one player's actions and strategies (usually the defender), we have shown how the game evolves by taking both parties set of available actions and strategies into consideration. More specifically, instead of considering a ``dummy defender'' in our modeling, we gave it the opportunity to learn regarding the attack technology incrementally. As time elapses, and the defender captures more attack samples, it can reach better detection rates and accuracy. This learning rate indeed reflects into higher quality attack signatures and detection rates which can be used to bring a next adversarial move to a halt and hence to contain the adversary meaning that the adversary's probability of making a progressive move in the game decreases consequently.


By focusing on the most significant and tangible aspects of sophisticated cyber attacks i.e., (1) the amount of time it takes for the adversary to accomplish its mission and (2) the success probabilities of fulfilling the attack objectives, we were able to study under which circumstances the defense system can provide an effective response that makes the probability of  reaching an attack objective $k$ to be $poly(T_{bud})negl(k)$ in which $T_{bud}$ is the attacker time budget. This led us to the definition of a \emph{security capacity region} as a metric for gauging a defensive system’s efficiency from a security perspective. In particular, we show that a stagnating learning rate allows the attacker to win meaning that being able to reach its attack objective within a limited budget (e.g., time, US dollars), whether it is constructing a botnet of any specific size or locating information on a distributed number of nodes. The defender cannot wait for too long learning about a used attack vector/exploit, and once learning starts it must continue learning with an associated detection probability converging fast enough to $1$. Our security analysis gives precise recommendations for the defender, i.e.,  $1-f(l)\leq O(l^{-a})$ for some $a \geq 2$ with a proof for $a=2$ in our framework. The attacker needs to find just one attack vector/exploit for which the defender is too slow to react or too slow in learning.


An essential venue of future work is to estimate the learning rate $f(l)$ based on the number of observed malicious acts given a ``worst-case adversary''. Our framework lays the foundation for such work and allows to give a worst-case probabilistic bound on the maximal reached attack objective based on the estimated $f(l)$. This, in turn, will give guidance to the defender in how to allocate its resources. In addition, we notice that the learning rate might not always be positive, for instance, when dealing with a \emph{delusive} adversary who maliciously engineers the training data to prevent a learner from generating an accurate classifier, even if the training data is correctly labeled \cite{newsome2006paragraph}. Therefore, in case of noise injection attacks, such as deliberately crafted attack samples to mislead the defender's learning engine, and in general a delusive adversary, the learning engine's false positive rates should be taken into consideration, and we leave this problem for future studies.

\newpage
\bibliographystyle{splncs04} 
\bibliography{sections/references}

\appendix
\section{Security Analysis of Case studies}\label{sec:ScenariosAnalysis}


\subsection{Security Analysis of Case study \ref{CS1}}
 As in section \ref{sec:secAna.sim}, we investigated the role of the convergence rate of $f(.)$ on the attacker's chances of winning the game, we know that for a stagnating learning rate, i.e.,  $f(.) = 1-\tau$, for some $\tau>0$, the adversary will always win. Therefore, $f$ must not stagnate unless the attacker decides to drop out of the game.
To show how effective a learning mechanism could be with respect to containing the adversary's progress in the game, we consider the infamous \emph{CodeRed1v2} worm's actual attack settings and parameters \cite{rohloff2005stochastic} as an example. 
In the codeRed1v2 epidemic, we consider the address space to be $N = 2^{32}$ (the entire IPv4 address space), the approximate number of nodes susceptible to the malware as $K=350,000$ (i.e., $p=8.15 \times 10^{-5}$), and the number of scans performed by an infected machine to be $10,188$ scans per hour, and an initially one infected node at time zero. 

Fig. \ref{fig.Sim_ODE_Learning} compares the total number of infections in the first few hours of the epidemic (1) using the well-accepted simple deterministic epidemic model \cite{zou2002code} and (2) by simulating the game using Algorithm \ref{alg:sim} with no learning. Notice that this plot shows simulations and in this sense, it depicts the average case; no information about the probability of a worst-case can be extracted. The plot indicates that giving the attacker the advantage of $p= \sup_i p_i$ in our analysis is a reasonable assumption as it does not boost its progress in the game significantly.  

Based on the simple deterministic epidemic model (see Fig. \ref{fig.comparison} (a)), we know that the total number of infected machines reaches its maximum (i.e., 350,000) in less than 30 hours, while with a learning function of the form $f(l) = 1 - 1/(l/10,000+1)$, the attacker has control of less than $350$ nodes (on average) and its progress is almost contained due to high value detection rate of the defender (see Fig. \ref{fig.comparison} (b)). 
 

\begin{figure}
	\begin{center}
		\scalebox{0.45}
		{\includegraphics{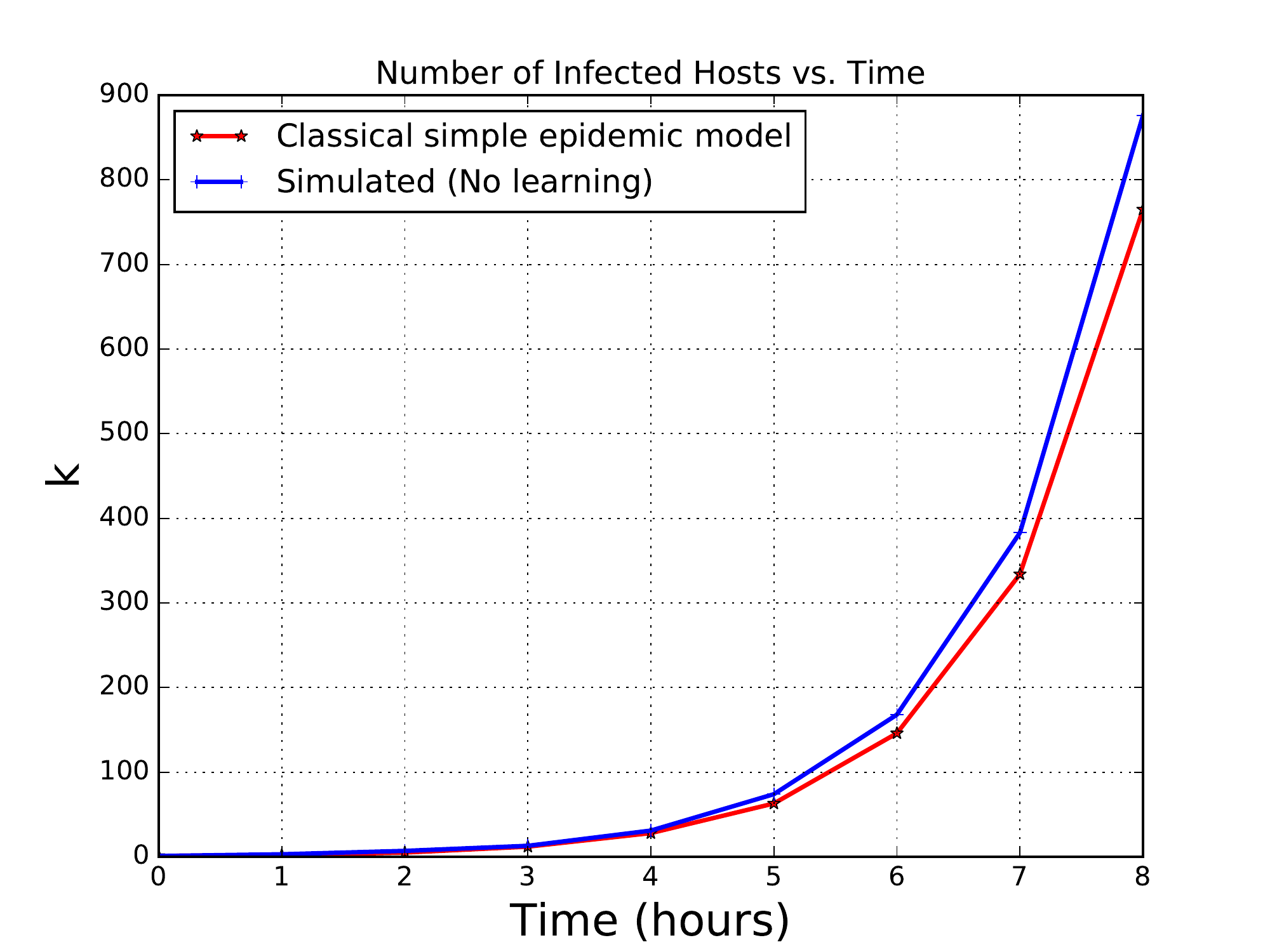}}
		\caption{Studying the role of $p$ on reaching attack objective based on simple deterministic epidemic model and simulations of Algorithm \ref{alg:sim} with no learning, and $p= \sup_i p_i$}
		\label{fig.Sim_ODE_Learning}
	\end{center}
\end{figure}

\begin{figure}
	\begin{center}
		\scalebox{0.45}
		{\includegraphics{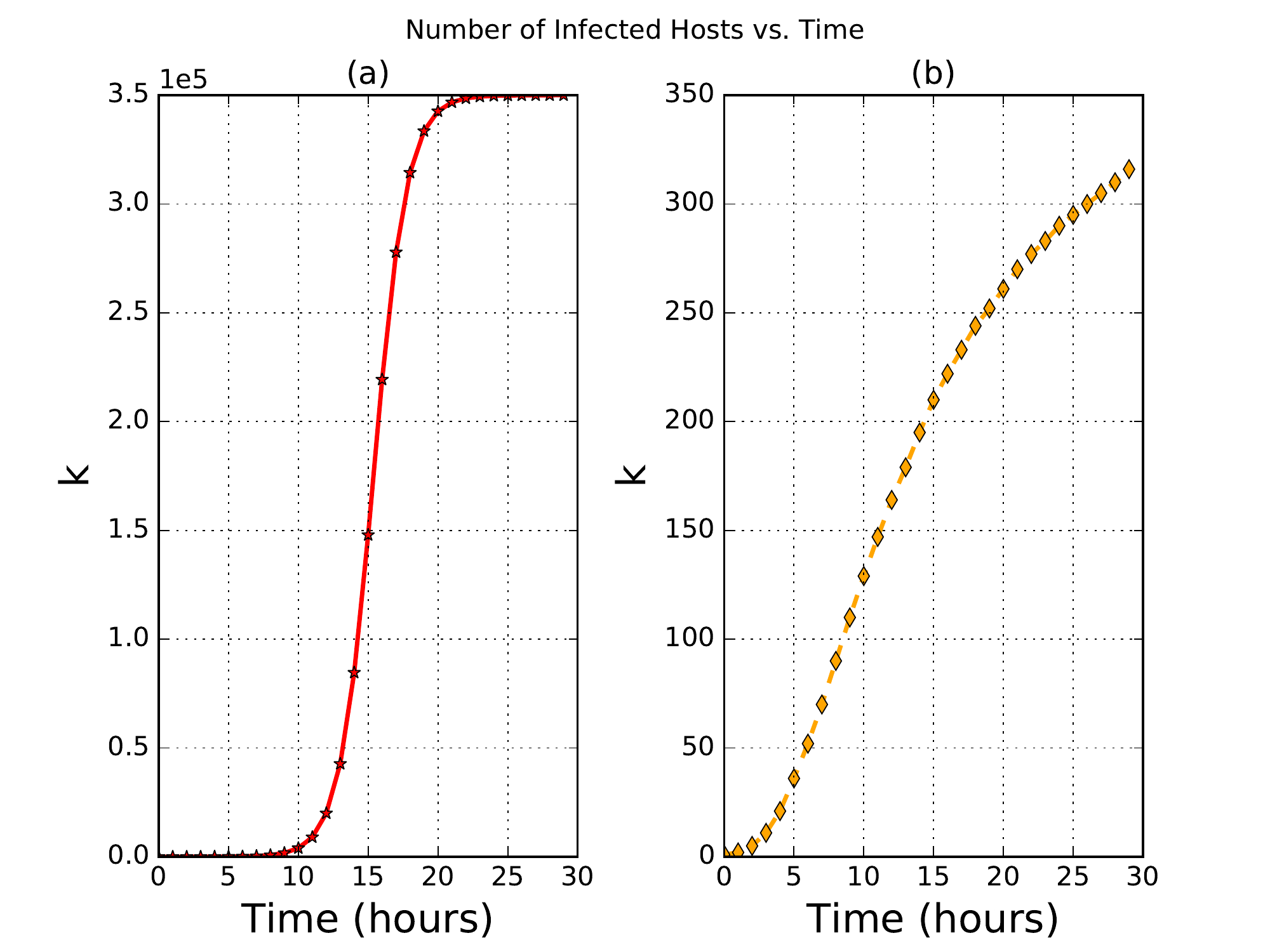}}
		\caption{$(a)$ Simple deterministic epidemic model with the actual codeRed1v2 parameters. $(b)$ Simulation of the codeRed1v2 with the same parameters while the learning function is of the form $f(l)= 1- 1/(l/10,000 +1)$ and $\gamma = 0.01$ as the classifier's accuracy}
		\label{fig.comparison}
	\end{center}
\end{figure}

Let us assume that the defender is interested in figuring out the highest number of infected nodes during a long window of time (e.g., a month) for a designated insignificant  (e.g., $2^{-128} \approx 10^{-38}$) probability of reaching the attack objective by the adversary. 
Fig. \ref{fig.lastOne} depicts $\bar{w}(k,T_{bud})$ (as an upper bound of $w(k,T_{bud})$) when considering a fixed learning function of the form $f(l)= 1- 1/(l/1000 +1)$, while $p=8.15 \times 10^{-5}$, and $\gamma = 0.05$ as the classifier's accuracy. 
Table \ref{table_extra} shows how a simple extrapolation of $k$ can be done for a time window of a month (i.e., $T_{bud} \leq 31 \times 24 \times 10,188$ adversarial moves for the CodeRed example), and setting $\bar{w}$ to $10^{-38}$. 
By defining $k(t)$ as the `fractional' $k$ when $10^{-38}$ is reached  for different $t$ of the form  $T_{bud}=2^t$, it can be seen that the  differences i.e., $k(t+1)-k(t)$ is decreasing for $t > 11$. We can conclude that by taking the last difference, say $\kappa$, and computing the $\log_2$ of the one month time budget $t'$,  the value of $k$ is computed to be at most 130 infected nodes using $k= (t'-14)\cdot \kappa + k(14)$.
This means that in practice, with these attack-defense parameters, {\em the probability that the attacker constructs an army of 130 infected nodes within a one-month time window would be at most $2^{-128}$ } which depicts how vital the learning function is when it comes to containing the adversary. 


\begin{figure}
	\begin{center}
		\scalebox{0.45}
		{\includegraphics{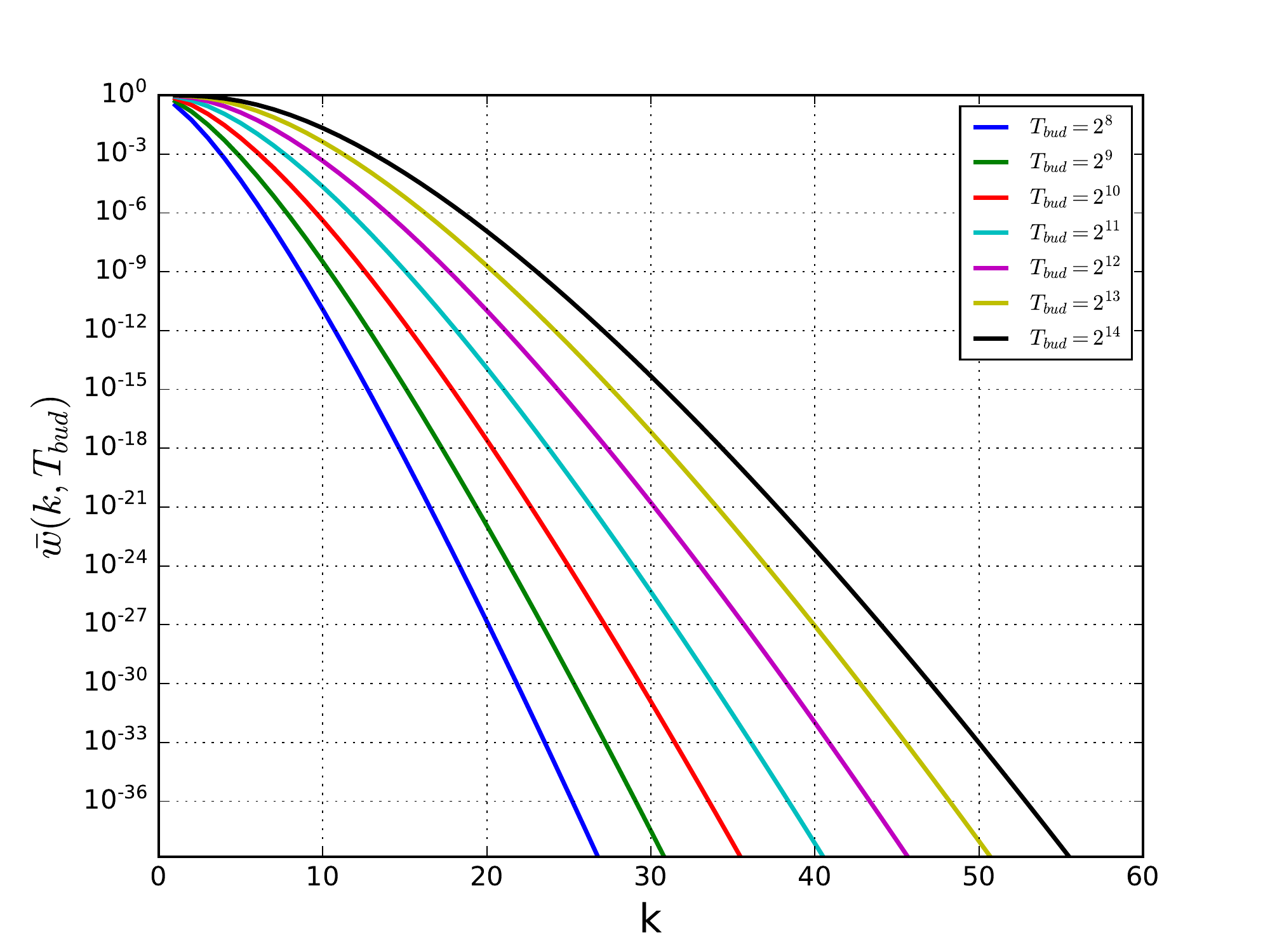}}
		\caption{$\bar{w}(k,T_{bud})$ for different values of $T_{bud} = 2^{t}, 8\leq t \leq 14$ with a fixed learning function of the form $f(l)= 1- 1/(l/1000 +1)$, $p=8.15 \times 10^{-5}$, and $\gamma = 0.05$ as the classifier's accuracy}
		\label{fig.lastOne}
	\end{center}
\end{figure}

\begin{table}[!t]
\renewcommand{\arraystretch}{1.3}
\caption{Extrapolating $k$ for a One-month Time Window and $\bar{w}\approx 10^{-38}$}
\label{table_extra}
\centering
\begin{tabular}{ |p{2cm}|p{2cm}|p{2cm}|  }
	\hline
	 $t~~ (T_{bud}=2^t)$ &$k(t)$&$k(t+1)-k(t)$\\
	\hline
	8    &25.61&   4.01\\
	9  & 29.62   &4.64\\
	10 & 34.26&  5.03\\
	11 & 39.29&  5.13\\
	12  & 44.42&5.02\\
	13  & 49.44   &4.78\\
	14  & 54.22&-\\
	\hline
\end{tabular}
\end{table}

\subsection{Security Analysis of Case study \ref{CS2}}

In this case, the defender is not trying to generate an attack signature and is therefore not able to bring a next adversarial move to a halt. On the other hand, the defender defense strategy is to reallocate/shuffle its resources (or at least a fraction of sensitive ones) when a sufficient number of attack evidence is captured via the defense system (or in other words, the defender's objective is to generate a proper attack detection flag). 
Notice that the presented analysis in section \ref{AC-phase} can immediately be applied to this case study. As $f(l)$ remains $0$ during this game, meaning the defender is not capable of taking actions to bring an adversarial move to a halt, therefore, the delayed learning analysis (see Theorem \ref{Theo.delay}) can be used to study how many targets will be discovered during the game by the adversary, before a threshold $L^*$ is reached by the defender after which the defender shuffles its resources pushing the adversary to the beginning state of the game in the Markov model. The probability that $> k$ targets with unlimited time budget are discovered is equal to $1-u(k,L^*)$ which is exponentially small in $k$. (A more exact upper bound for limited time budgets can be found by applying Theorem \ref{Theo.wbar}.)

\section{Proofs}\label{sec:proofs}


\subsection{Stagnating Learning Rate -- Proof of Theorem \ref{Theo.stag}}

In order to prove a lower bound on $w(k,T_{bud})$ we give a benefit to the defender and assume the learning rate $f(l)$ is as large as possible given  $1-f(l)\geq\tau$, i.e., $1-f(l)=\tau$ with equality.

If $1-f(l) = \tau$, then $m_1,m_2,m_3$ and $m_4$ are independent of $l$ and the Markov model reduces to Fig. \ref{fig.reducedMarkovModel} where each  state $i$ represents the {\em collection} of states $(i,l)$ for $l\geq 0$ in the original Markov model.

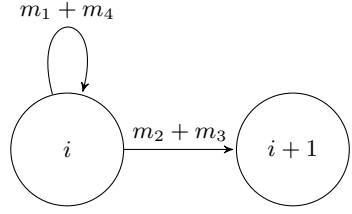
\begin{figure}
	\centering
	\begin{tikzpicture}
	\tikzset{
		>=stealth',
		node distance=3cm and 3cm ,on grid,
		every text node part/.style={align=center},
		state/.style={minimum width=1.5cm,
			draw,
			circle},
	}
	\node[state]             (c) {$i$};
	\node[state, right=of c] (r) {$i+1$};
	\draw[every loop]
	(c) edge[auto=left]  node {$m_{2}+m_3$} (r)
	(c) edge[loop above]             node {$m_1+m_4$} (c);
	\end{tikzpicture}
	\caption{Reduced Markov model of the game for a constant $1-f(l)=\tau>0$}
	\label{fig.reducedMarkovModel}
\end{figure}

Within time budget $T_{bud}$ we reach $i=k$ with probability 
%
	$$ w(k,T_{bud})=  \sum_{T=0}^{T_{bud}-k} {\sum_{\substack{f_0,\dots,f_{k-1} s.t.\\ \sum f_j=T}}
		\prod_{j} (m_1+m_4)^{f_j}(m_2+m_3)^{k}
	}, $$
where the time budget is distributed over $k$ steps from each state to the next until state $k$ is reached and  $T\leq T_{bud}-k$ transitions consisting of $f_i$ self-loops in state $i$ for $0\leq i\leq k-1$.
This probability is equal to
%
%
%
	\begin{eqnarray*}
		&& \sum_{\substack{f_0,\dots,f_{k-1} s.t.\\ \sum f_j\leq T_{bud}-k}}
		\prod_{j} (m_1+m_4)^{f_j}(m_2+m_3)^{k} \\
		&\geq& \sum_{f_0=0}^{(T_{bud}-k)/k}\dots\sum_{f_{k-1}=0}^{(T_{bud}-k)/k}\prod_{j}(m_1+m_4)^{f_j}(m_2+m_3)^{k}\\
		&=& \prod_{j=0}^{k-1}\sum_{f_j=0}^{(T_{bud}-k)/k}(m_1+m_4)^{f_j}(m_2+m_3)^{k}\\
		&=& \prod_{j=0}^{k-1}\frac{1-(m_1+m_4)^{T_{bud}/k}}{1-(m_1+m_4)}(m_2+m_3)^{k}\\
		&=& [1-(m_1+m_4)^{T_{bud}/k}]^{k} \geq 1-k(m_1+m_4)^{T_{bud}/k},
	\end{eqnarray*}
where the last equality follows from $1-(m_1+m_4)=m_2+m_3$. Now we substitute $m_1+m_4=1-\tau p$ which yields Theorem \ref{Theo.stag}. 


\subsection{Closed Form Winning Probability -- Proof of Theorem \ref{closedw}}

Before proving any bounds on $w(k,T_{bud})$ we first provide a closed form for $w(k,T_{bud})$ itself. To this purpose we introduce $p(k-1,L,T_{bud})$ defined as {\em the probability that the adversary reaches state $(k-1,L)$ and is ready to leave state $(k-1,L)$ (after zero or more self-loops at state $(k-1,L)$) within time budget $T_{bud}$}. Any possible path from state $(0,0)$ to state $(k-1,L)$ in the Markov model with self-loops along the way in each of the visited states (including $(k-1,L)$) contributes to this probability. In our notation we use $k-1$ rather than $k$ because after a path to $(k-1,L)$ for some $L\geq 0$ only a single horizontal or diagonal move is needed to reach $(k,L)$ for the first time, and this probability is what we will need for our characterization of $w(k,T_{bud})$:

The formula for $w(k,T_{bud})$ considers all the possible paths towards a state $(k-1,L)$ within $T_{bud} -1$ transitions and a single final horizontal or diagonal transition (via $m_2(L)$ or $m_3(L)$) towards the winning state $(k,L)$. We are interested in the probability of entering state $(k,L)$ for some $L\geq 0$ and the time it takes to reach there \emph{for the first time}. Appropriately summing over probabilities $p(k-1,L,T_{bud}-1)$ gives

\begin{eqnarray}
w(k, T_{bud})
 &=&	\sum_{L\geq 0} p(k-1,L,T_{bud}-1) \cdot	 \bigg( \frac{m_2(L)}{1-m_1(L)}+\frac{m_3(L)}{1-m_1(L)}\bigg), \label{firstw}
\end{eqnarray}
where each path represented by $p(k-1,L,T_{bud})$ is ready to leave $(k-1,L)$ implying that the horizontal and diagonal transition probabilities are conditioned on ``not having a self-loop'' and this explain the division by $1-m_1(L)$.

A path to $(k-1,L)$ till the moment it is ready to leave $(k-1,L)$ is uniquely represented by 
\begin{itemize}
\item $g_l$; the number of states on the path that have the same level $l$,
\item $b_l$; if level $l$ is entered via a vertical transition, then $b_l=0$; if level $l$ is entered via a diagonal transition, then $b_l=1$,
\item $t(i,l)$ counts the number of self-loops on the path in state $(i,l)$.
\end{itemize}
Notice that $b_0$ is undefined and $g_l\geq 1$ for all $0\leq l \leq L$. Fig. \ref{fig.samplePath} depicts a typical path from $(0,0)$ to $(k-1,L)$ without showing any self-loops, where 
\begin{itemize}
\item $i_{l+1}=i_l+b_{l+1}+g_{l+1}$ and $i_0=0$; when the path enters level $l$ for the first time, $k=i_l$.
\end{itemize}

	\begin{figure}
		\begin{center}
			\scalebox{0.50}
			{\includegraphics{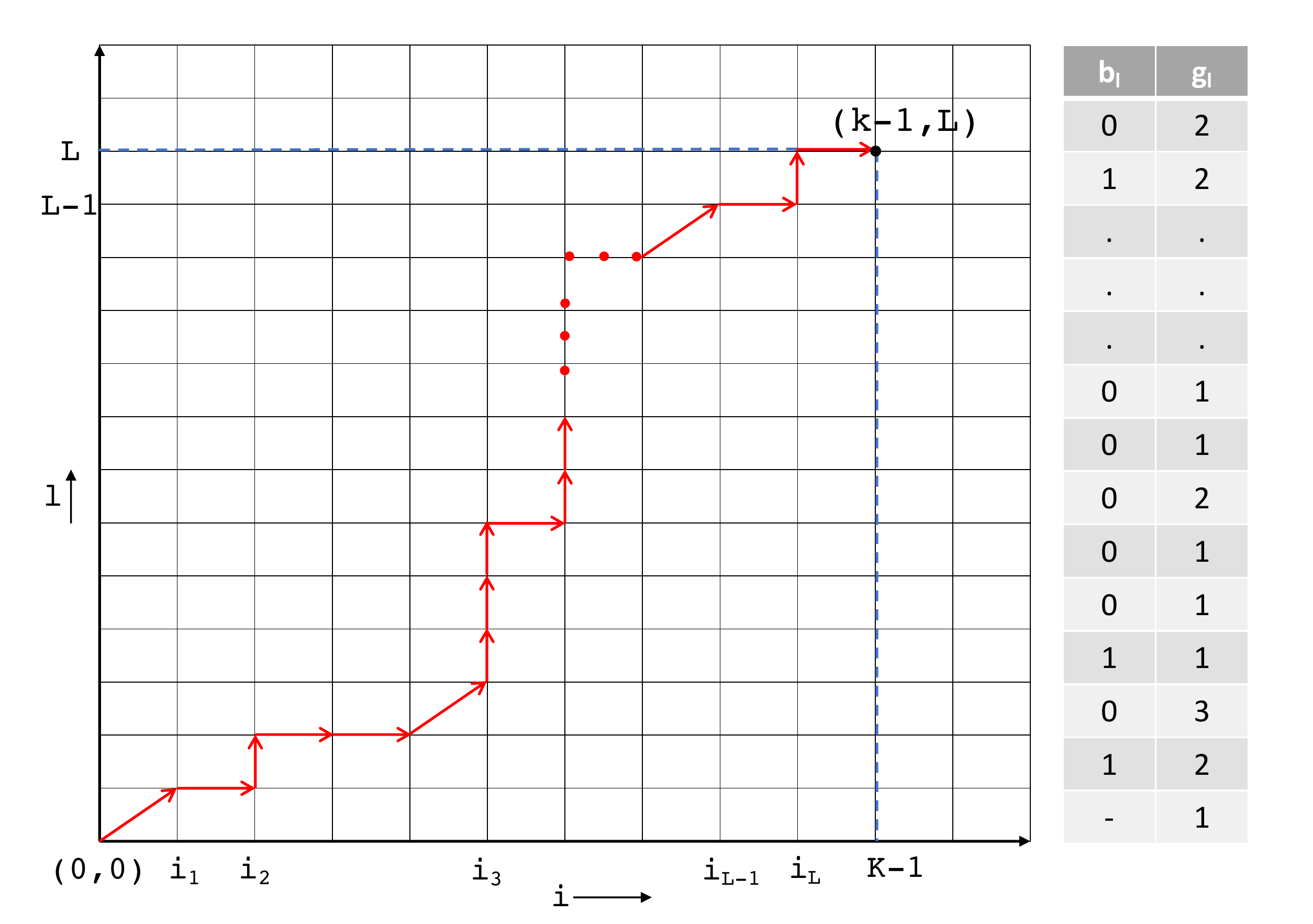}}
			\caption{A sample path toward the winning state without depicting the self-loops at each state $(i,l)$ in the path}
			\label{fig.samplePath}
		\end{center}
	\end{figure}
	
	The probability of having exactly $t(i,l)$ self-loops in state $(i,l)$ after which the path exits $(i,l)$ is equal to 
	$$Prob(t(i,l)=u) = m_1(l)^{u}(1-m_1(l))$$
	and describes a Poisson process.
	The probability  that the path exits $(i,l)$ along a horizontal, diagonal or vertical transition given  no more self-loops in $(i,l)$ is equal to
	$$
	\begin{aligned}
	Prob(\mbox{horizontal~transition}) &= Prob(\mytiny{(}i,l\mytiny{)}\rightarrow\mytiny{(}i+1,l\mytiny{)}|\mbox{no~self-loop})\\
	&= \frac{m_2(l)}{1-m_1(l)},\\
	Prob(\mbox{diagonal~transition}) &= Prob(\mytiny{(}i,l\mytiny{)}\rightarrow\mytiny{(}i+1,l+1\mytiny{)}|\mbox{no~self-loop})\\
	&= \frac{m_3(l)}{1-m_1(l)},~\mbox{and}\\
	Prob(\mbox{vertical~transition}) &= Prob(\mytiny{(}i,l\mytiny{)}\rightarrow\mytiny{(}i,l+1\mytiny{)}|\mbox{no~self-loop})\\
	&= \frac{m_4(l)}{1-m_1(l)}.\\
	\end{aligned}
	$$
Combination of the probabilities describing diagonal and vertical transitions shows that
$$ Prob(\mytiny{(}i_{l}+b_{l+1}+g_{l+1},l\mytiny{)}\rightarrow\mytiny{(}i_{l+1},l+1\mytiny{)}|\mbox{no~self-loop})
=  b_{l+1}\frac{m_3(l)}{1-m_1(l)} + (1-b_{l+1})\frac{m_4(l)}{1-m_1(l)}.$$
The above analysis proves
that the probability of having the Markov model transition along a path which is represented by $g_l,b_l,t(i,l)$, and $i_l$ is equal to
		\begin{subequations}
			\begin{eqnarray}
			&& \prod_{l=0}^{L-1} \bigg( b_{l+1}\frac{m_3(l)}{1-m_1(l)} + (1-b_{l+1})\frac{m_4(l)}{1-m_1(l)} \bigg) \label{path-a}\\
			&&  \cdot \prod_{l=0}^{L} \bigg( \frac{m_2(l)}{1-m_1(l)}\bigg)^{g_l-1} \label{path-b}\\
			&&  \cdot \prod_{l=0}^{L}\prod_{i=0}^{g_l-1}m_1(l)^{t(i_l+i,l)}(1-m_1(l)), \label{path-c}
			\end{eqnarray}
	\end{subequations}
where (\ref{path-a}) corresponds to diagonal and vertical transitions, (\ref{path-b}) corresponds to horizontal transitions, and (\ref{path-c}) corresponds to self-loops.

The length of the path, i.e., total number of transitions without self-loops,  is equal to 
$$ 
(k-1) + L -B, \mbox{ where } B=\sum_{l=1}^L b_l
$$
is the total number of diagonal transitions.  The total number of vertical transitions is equal to $L-B$ and the number of horizontal transitions is equal to $\sum_{l=0}^L (g_l-1) = (k-1)-B$.

Summing the above probability over all possible combinations of $g_l$, $b_l$, and $t(.,.)$ for paths that reach state $(k-1,L)$  and are ready to leave $(k-1,L)$ such that the number of self-loops is equal to $T$ and the total number of transitions including self-loops is equal to $(k-1)+L-B + T\leq T_{bud}$ leads to an expression for $p(k-1,L,T_{bud})$:

	\begin{subequations}
		\begin{eqnarray}
	&&	p(k-1,L,T_{bud}) = 
		\sum_{\substack{B,T s.t.\\
				(k-1)+L-B+T\leq T_{bud}}}\nonumber\\& &
				\hspace{1cm}
		{ 
			\sum_{\substack{b_1,\dots,b_L\in\{0,1\} s.t.\\ 
					\sum_{l=1}^{L}b_l=B}}
(\ref{path-a})
		} \label{p-a} \\& &
		\hspace{1cm}
		{	\sum_{\substack{g_0\geq 1,\dots,g_L\geq 1 s.t.\\ \label{p-b}
					\sum_{l=0}^{L}(g_l-1)=(k-1)-B}}
(\ref{path-b})
		}   \label{p-b} \\& &
		\hspace{1cm}
		{
			\sum_{\substack{t_0,\dots,t_L s.t.\\ 
					\sum_{l=0}^{L}t_l=T}}
			\sum_{\substack{t(i_0,0),\dots,t(i_0+g_0-1,0) \\
						s.t.\sum_{i=0}^{g_0-1}t(i_0+i,0)=t_0}}
						\ldots
	\sum_{\substack{t(i_L,L),\dots,t(i_L+g_L-1,L) \\
						s.t.\sum_{i=0}^{g_L-1}t(i_L+i,L)=t_L}}					
(\ref{path-c})} \label{p-c}
		\end{eqnarray}
	\end{subequations}
	
	Notice that $(\ref{path-a})=0$ if $B>L$ and $(\ref{path-b})=0$ if $B>k-1$. So, the main sum only needs to consider $B\leq \min\{L,(k-1)\}$.
	Plugging the above formula into (\ref{firstw}) gives

	\begin{subequations}
\begin{eqnarray}
w(k, T_{bud})
&=&  \sum_{L=0}^{T_{bud}-1}\sum_{B=0}^{\min\{L,(k-1)\}}
  \frac{m_2(L)+m_3(L)}{1-m_1(L)} \cdot (\ref{p-a})\cdot (\ref{p-b}) \label{w-a} \\
&&
 \cdot \sum_{T=0}^{(T_{bud}-1)-[(k-1)+L-B]} (\ref{p-c}).  \label{w-b}
\end{eqnarray}
\end{subequations}

This  closed form together with
{\footnotesize
\begin{eqnarray*}
 (\ref{p-c}) &=& 
	\sum_{\substack{t_0,\dots,t_L s.t.\\ 
					\sum_{l=0}^{L}t_l=T}}
			\sum_{\substack{t(i_0,0),\dots,t(i_0+g_0-1,0) \\
						s.t.\sum_{i=0}^{g_0-1}t(i_0+i,0)=t_0}}
						\ldots
	\sum_{\substack{t(i_L,L),\dots,t(i_L+g_L-1,L) \\
						s.t.\sum_{i=0}^{g_L-1}t(i_L+i,L)=t_L}}					
\prod_{l=0}^{L}			
\prod_{i=0}^{g_l-1}m_1(l)^{t(i_l+i,l)}(1-m_1(l)) 
\end{eqnarray*}
\begin{eqnarray*}
&=&
\sum_{\substack{t_0,\dots,t_L s.t.\\ 
					\sum_{l=0}^{L}t_l=T}}
			\sum_{\substack{t(i_0,0),\dots,t(i_0+g_0-1,0) \\
						s.t.\sum_{i=0}^{g_0-1}t(i_0+i,0)=t_0}}
						\ldots
	\sum_{\substack{t(i_L,L),\dots,t(i_L+g_L-1,L) \\
						s.t.\sum_{i=0}^{g_L-1}t(i_L+i,L)=t_L}}	
\prod_{l=0}^{L}		
 m_1(l)^{t_l}(1-m_1(l))^{g_l}	\\
 &=&
\sum_{\substack{t_0,\dots,t_L s.t.\\ 
					\sum_{l=0}^{L}t_l=T}}		
\prod_{l=0}^{L}	 \sum_{\substack{t(i_l,l),\dots,t(i_l+g_l-1,l) \\
						s.t.\sum_{i=0}^{g_l-1}t(i_l+i,l)=t_l}}
 m_1(l)^{t_l}(1-m_1(l))^{g_l}	\\
&=&	
	\sum_{\substack{t_0,\dots,t_L s.t.\\ 
					\sum_{l=0}^{L}t_l=T}}
	\prod_{l=0}^{L}			{t_l+g_l-1 \choose t_l}
 m_1(l)^{t_l}(1-m_1(l))^{g_l}	
\end{eqnarray*}						
}
proves Theorem \ref{closedw}.



\subsection{Upper and Lower Bounds -- Proof of Theorem \ref{Theo.wbar}}

By noticing that (\ref{w-b}) for $T_{bud}=\infty$ is equal to 1, a straightforward upper bound on $w(k,T_{bud})$ is given by just formula (\ref{w-a}): We have $w(k,T_{bud})$ is at most equal to $\bar{w}(k,T_{bud})$.


In order to prove a lower bound we first analyze
\begin{eqnarray*}
(\ref{w-b}) &=& \sum_{T=0}^{(T_{bud}-1)-[(k-1)+L-B]} (\ref{p-c})\\
&=& \sum_{T=0}^{(T_{bud}-k-L+B)}\sum_{\substack{t_0,\dots,t_L s.t.\\
		\sum_{l=0}^{L}t_l=T}} 
\sum_{\substack{t(i_0,0),\dots,t(i_0+g_0-1,0) \\
						s.t.\sum_{i=0}^{g_0-1}t(i_0+i,0)=t_0}}
						\ldots
	\sum_{\substack{t(i_L,L),\dots,t(i_L+g_L-1,L) \\
						s.t.\sum_{i=0}^{g_L-1}t(i_L+i,L)=t_L}} \\
						&& \hspace{2cm}
 \prod_{l=0}^{L}	\prod_{i=0}^{g_l-1}m_1(l)^{t(i_l+i,l)}(1-m_1(l))\\
&=& \sum_{\substack{t_0,\dots,t_L s.t.\\
		\sum_{l=0}^{L}t_l \leq (T_{bud}-k-L+B)}} 
\sum_{\substack{t(i_0,0),\dots,t(i_0+g_0-1,0) \\
						s.t.\sum_{i=0}^{g_0-1}t(i_0+i,0)=t_0}}
						\ldots
	\sum_{\substack{t(i_L,L),\dots,t(i_L+g_L-1,L) \\
						s.t.\sum_{i=0}^{g_L-1}t(i_L+i,L)=t_L}} \\
						&& \hspace{2cm}
 \prod_{l=0}^{L} \prod_{i=0}^{g_l-1}m_1(l)^{t(i_l+i,l)}(1-m_1(l)).
 \end{eqnarray*}

 Notice that (\ref{w-b}) is used within another sum with $B\leq \min\{L,(k-1)\}$. This implies that $B\leq (k-1)L$ or equivalently $-(L+1)k\leq -k-L+B$.
 Let $v\geq 1$. For now we only consider $L\leq (T_{bud}/v)-1$, or equivalently $v\leq T_{bud}/(L+1)$. Then 
 $(L+1)(v-k)\leq T_{bud}-(L+1)k\leq T_{bud}-k-L+B$, hence,
 $$v-k\leq \frac{T_{bud}-k-L+B}{L+1}.$$
 This allows us to lower bound the above sums and obtain
 \begin{eqnarray*}
(\ref{w-b})  &\geq& \sum_{t_0=0}^{v-k}\dots\sum_{t_L=0}^{v-k}  
\sum_{\substack{t(i_0,0),\dots,t(i_0+g_0-1,0) \\
						s.t.\sum_{i=0}^{g_0-1}t(i_0+i,0)=t_0}}
						\ldots
	\sum_{\substack{t(i_L,L),\dots,t(i_L+g_L-1,L) \\
						s.t.\sum_{i=0}^{g_L-1}t(i_L+i,L)=t_L}}\\
						&& \hspace{2cm}
\prod_{l=0}^{L}\prod_{i=0}^{g_l-1}m_1(l)^{t(i_l+i,l)}(1-m_1(l))\\
&\geq&
\sum_{\substack{t(i_0,0),\dots,t(i_0+g_0-1,0) \\
						s.t.\sum_{i=0}^{g_0-1}t(i_0+i,0)\leq v-k}}
						\ldots
	\sum_{\substack{t(i_L,L),\dots,t(i_L+g_L-1,L) \\
						s.t.\sum_{i=0}^{g_L-1}t(i_L+i,L)\leq v-k}}
\prod_{l=0}^{L}\prod_{i=0}^{g_l-1}m_1(l)^{t(i_l+i,l)}(1-m_1(l))\\
&\geq& 
\sum_{t(i_0,0)\leq t_0/g_0}\dots \sum_{t(i_0+g_0-1,0) \leq (v-k)/g_0}
\ldots
\sum_{t(i_L,L)\leq t_L/g_L}\dots \sum_{t(i_L+g_L-1,L) \leq (v-k)/g_L} \\
&& \hspace{2cm}
 \prod_{l=0}^{L} \prod_{i=0}^{g_l-1}m_1(l)^{t(i_l+i,l)}(1-m_1(l))\\
&=& \prod_{l=0}^{L} \prod_{i=0}^{g_l-1} \sum_{t(i_l+i,l)\leq(v-k)/g_l}m_1(l)^{t(i_l+i,l)}(1-m_1(l))\\
&=& \prod_{l=0}^{L} \prod_{i=0}^{g_l-1}\frac{1-m_1(l)^{\frac{v-k}{g_l} +1}}{1-m_1(l)}(1-m_1(l))\\
&=&\prod_{l=0}^{L}(1-m_1(l)^{\frac{v-k}{g_l} +1})^{g_l} \geq  \prod_{l=0}^{L}(1-g_l m_1(l)^{\frac{v-k}{g_l} +1}).
\end{eqnarray*}

We observe that $g_l\leq k$ within the larger sum that contains  (\ref{w-b}). If $v\geq k$, then this implies $v/k \leq 1+(v-k)/g_l$ proving
 \begin{eqnarray*}
(\ref{w-b})
&\geq& \prod_{l=0}^{L}(1-g_l m_1(l)^{v/k})
\end{eqnarray*}

%
Since $m_1(l)=[(1-\gamma)(1-(p+h)(1-f(l)))]\leq 1-\gamma$,  we can further lower bound this to
\begin{eqnarray*}
(\ref{w-b})	&\geq& \prod_{l=0}^{L}(1-g_l (1-\gamma)^{v/k})\\
	&\geq& 1- (\sum_{l=0}^{L}g_l)(1-\gamma)^{v/k}. 
\end{eqnarray*}
Within the larger sum that contains (\ref{w-b}), we have $\sum_{l=0}^{L} g_l = (L+1) + \sum_{l=0}^L (g_l-1)= (L+1)+(k-1-B)=k + L-B\leq k+T_{bud}/v$. This yields 
$$(\ref{w-b})\geq 1-(k+\frac{T_{bud}}{v})(1-\gamma)^{v/k}.$$

The obtained lower bound on $(\ref{w-b})=\sum_{T=0}^{(T_{bud}-1)-[(k-1)+L-B]} (\ref{p-c})$  is independent of any of the other summing variables used in (\ref{w-a}) but requires $L+1\leq T_{bud}/v$. By restricting $L$ to be $\leq (T_{bud}/v)-1$ in (\ref{w-a}), i.e., we substitute $T_{bud}$ by $T_{bud}/v$, we obtain a lower bound on $w(k,T_{bud})$:
\begin{eqnarray*}
w(k, T_{bud})
&=&  \sum_{L=0}^{T_{bud}-1}\sum_{B=0}^{\min\{L,(k-1)\}} 
  \frac{m_2(L)+m_3(L)}{1-m_1(L)} \cdot (\ref{p-a})\cdot (\ref{p-b}) 
 \cdot \sum_{T=0}^{(T_{bud}-1)-[(k-1)+L-B]} (\ref{p-c}) \\
 &\geq &  \sum_{L=0}^{(T_{bud}/v)-1}\sum_{B=0}^{\min\{L,(k-1)\}} 
  \frac{m_2(L)+m_3(L)}{1-m_1(L)} \cdot (\ref{p-a})\cdot (\ref{p-b}) 
 \cdot \sum_{T=0}^{(T_{bud}-1)-[(k-1)+L-B]} (\ref{p-c}) \\
& \geq & \sum_{L=0}^{T_{bud}/v-1}\sum_{B=0}^{\min\{L,(k-1)\}} 
  \frac{m_2(L)+m_3(L)}{1-m_1(L)} \cdot (\ref{p-a})\cdot (\ref{p-b}) (1-(k+\frac{T_{bud}}{v})(1-\gamma)^{v/k}) \\
  &=&
 \bar{w}(k,T_{bud}/v)\cdot (1-(k+\frac{T_{bud}}{v})(1-\gamma)^{v/k}),
\end{eqnarray*}
where the last equality follows from the fact that (\ref{p-a}) and (\ref{p-b}) do not depend on $T_{bud}$ (they depend on $L$). This completes the proof.



\subsection{Analyzing the Learning Rate -- Proof of Theorem \ref{Theo.wbarUB}}

Theorem \ref{Theo.wbarUB} has a couple of statements and we start by proving the first most general claim that upper bounds $\bar{w}(k,T_{bud})$ in terms of the general parameters of the Markov model. 

We define and assume
$$ \beta(l)=\sqrt{\frac{1-f(l)}{d}} \leq 1.$$
This allows us to bound
\begin{eqnarray*}
\alpha(l) &=& \frac{p(1-f(l))}{\gamma + (1-\gamma)(p+h)(1-f(l))} \leq \frac{p(1-f(l))}{\gamma}=\frac{pd\beta(l)^2}{\gamma}, \\
\alpha(l) &=& \frac{p(1-f(l))}{\gamma + (1-\gamma)(p+h)(1-f(l))} \geq p(1-f(l)) = pd\beta(l)^2.
\end{eqnarray*}
Since $\beta(l)\leq 1$,  $\beta(l)^2\leq \beta(l)$.
Let $\theta$ be such that $(1-\gamma)/\gamma\leq \theta$. Then the above inequalities allow us to bound 
	\begin{eqnarray*}
		\frac{m_2(l)}{1-m_1(l)} &=& (1-\gamma)\alpha(l) \leq \frac{(1-\gamma)pd}{\gamma} \beta(l)^2\leq \theta pd \beta(l)^2 \\
		&\leq& \theta pd  \beta(l),\\
		\frac{m_3(l)}{1-m_1(l)} &=& \gamma \alpha(l) \leq  pd \beta(l)^2, \mbox{ and}\\
		\frac{m_4(l)}{1-m_1(l)} &=& 1-\alpha(l) \leq 1- pd \beta(l)^2.
	\end{eqnarray*}

Plugging the above bounds in expression (\ref{UB}) for $\bar{w}(k,T_{bud})$ with $\frac{m_2(L)}{1-m_1(L)}\leq \theta pd \beta(L)^2$ and  $\frac{m_2(l)}{1-m_1(l)}\leq \theta pd \beta(l)$ yields
	\begin{subequations}
		\begin{eqnarray}
		&& \bar{w}(k,T_{bud})
		\leq
		\sum_{L=0}^{T_{bud}-1} (1+ \theta)pd \beta(L)^2 \nonumber \\ 
		& &	
		\cdot \sum_{B=0}^{L} { \sum_{\substack{b_1,\dots,b_L\in\{0,1\} \\ 
					s.t. \sum_{l=1}^{L}b_l=B}}
			\prod_{l=0}^{L-1} \begin{array}{l} b_{l+1}pd \beta(l)^2  \\ + (1-b_{l+1})(1-pd \beta(l)^2) \end{array} 
		} \label{ub-a} \\
		& &
		{	\cdot \sum_{\substack{g_0,\dots,g_L\geq 1 s.t.\\ 
					\sum_{l=0}^{L} (g_l-1)=(k-1)-B}}
			\prod_{l=0}^{L} \bigg(\theta pd \beta(l)\bigg)^{g_l-1} 
		} \label{ub-b}
		\end{eqnarray}
	\end{subequations}

Let $b_{L+1}=0$ and define
	$$ Z = \prod_{l=0}^{L} \bigg(\theta pd \beta(l)\bigg)^{b_{l+1}}.$$

We will multiply (\ref{ub-b}) with $Z$ and show an upper bound of the product which is independent of $b_l$ and $B$. We will multiply (\ref{ub-a}) with $Z^{-1}$ and show that it behaves like a product of $1+\frac{\beta(l)}{\theta}$, which (as we will see) remains `small enough': 

We choose $d$ such that $\theta\leq (pd)^{-1}$ (if $d$ does not satisfy this inequality, then the to be derived upper bound will be larger than 1 and will therefore trivially hold for $\bar{w}(k,T_{bud})$). Then, since $B=\sum_{l=0}^{L}b_{l+1}$,
	\begin{eqnarray}
	&&  (\ref{ub-b})\cdot Z = \sum_{\substack{g_0,\dots,g_L\geq 1 s.t.\\ 
			\sum_{l=0}^{L} (g_l-1)+b_{l+1}=(k-1)}}
	\prod_{l=0}^{L} \bigg(\theta pd \beta(l)\bigg)^{(g_l-1)+b_{l+1}} \nonumber \\
	&& \leq \sum_{\substack{g'_0,\dots,g'_L\geq 0 s.t.\\ 
			\sum_{l=0}^{L} g'_l=(k-1)}}
	\prod_{l=0}^{L} \bigg(\theta pd \beta(l)\bigg)^{g'_l} \nonumber \\
	&&\leq
	(\theta pd)^{k-1}  \sum_{\substack{g'_0,\dots,g'_L\geq 0 s.t.\\ 
			\sum_{l=0}^{L} g'_l=(k-1)}}
	\prod_{l=0}^{L} \beta(l)^{g'_l} \nonumber \\
	&&\leq
	(\theta pd)^{k-1}  \sum_{g'_0\geq 0} \ldots \sum_{g'_L\geq 0}
	\prod_{l=0}^{L} \beta(l)^{g'_l} \nonumber \\
	&&=  (\theta pd)^{k-1} \prod_{l=0}^{L} \sum_{g'_l\geq 0} \beta(l)^{g'_l} 
	=(\theta pd)^{k-1}  \prod_{l=0}^{L} \frac{1}{1-\beta(l)}. \nonumber
	\end{eqnarray}

Now notice that each $b_{l+1}$ is either equal to $0$ or $1$ and therefore
$$(b_{l+1}pd\beta(l)^2+(1-\beta_{l+1})(1-pd\beta(l)^2)\cdot (\theta pd \beta(l))^{-b_{l+1}}
= b_{l+1} \frac{\beta(l)}{\theta} +(1-\beta_{l+1})(1-pd\beta(l)^2.$$
We derive
	\begin{eqnarray}
	&&  (\ref{ub-a})\cdot Z^{-1} 
	= 
	\sum_{B=0}^{L} \sum_{\substack{b_1,\dots,b_L\in\{0,1\} \\ 
			s.t. \sum_{l=1}^{L}b_l=B}}
	\prod_{l=0}^{L-1} \begin{array}{l} b_{l+1}\frac{ \beta(l)}{\theta} \\
	+ (1-b_{l+1})(1-pd \beta(l)^2) \end{array} \nonumber \\
	&& =
	\sum_{b_1,\dots,b_L\in\{0,1\} }
	\prod_{l=0}^{L-1}  \begin{array}{l} b_{l+1}\frac{ \beta(l)}{\theta} \\
	+ (1-b_{l+1})(1-pd \beta(l)^2) \end{array} \nonumber \\
	&&	= 
	\prod_{l=0}^{L-1} \sum_{b_{l+1}\in \{0,1\}} 
	\begin{array}{l} b_{l+1}\frac{ \beta(l)}{\theta} \\
	+ (1-b_{l+1})(1-pd \beta(l)^2) \end{array} \nonumber \\
	&& =
	\prod_{l=0}^{L-1} \bigg( \frac{\beta(l)}{\theta} + (1-pd \beta(l)^2) \bigg)
	\leq \prod_{l=0}^{L-1} \bigg( \frac{\beta(l)}{\theta} + 1 \bigg). \nonumber
	\end{eqnarray}
Combination of the above results proves
	\begin{eqnarray*}
		\bar{w}(k,T_{bud}) &\leq &\nonumber
		\sum_{L=0}^{T_{bud}-1} [\theta p d]^{k}(1+\frac{1}{\theta})\frac{\beta(L)^2}{1-\beta(L)}\prod_{l=0}^{L-1} \frac{1+\beta(l)/\theta}{1-\beta(l)}\\
		&\leq&  	\sum_{L=0}^{T_{bud}-1} [\theta p d]^{k}(1+\frac{1}{\theta})\prod_{l=0}^{L} \frac{1+\beta(l)/\theta}{1-\beta(l)},
	\end{eqnarray*}
where the last inequality uses $\beta(L)^2\leq 1\leq 1+\beta(L)/\theta$.
The argument in the resulting product is equal to $(1+\beta(l)/\theta)/(1-\beta(l)) = 1+(1+\theta^{-1})\beta(l)/(1-\beta(l))$. This completes the proof of the first statement. 

In order to prove the second bound in Theorem \ref{Theo.wbarUB} we define 
$$B'(l)=\frac{\beta(l)}{1-\beta(l)} \mbox{ and } y=1+\theta^{-1}$$
and apply the next lemma. Notice that $B'(l)\geq 0$ and, since $f'(l)\geq 0$ (because learning only increases), 
$$B''(l) = \beta'(l)\frac{2-\beta(l)}{(1-\beta(l))^2} = \frac{-f'(l)}{2\beta(l)} \frac{2-\beta(l)}{(1-\beta(l))^2}\leq 0.$$

\begin{lemma}\label{lemma.diri}
	For differentiable continuous functions $B(.)$, if $B'(l)\geq 0$ and $B''(l)\leq 0$ for $l\geq 0$, then, for $y\geq 0$,
	{\footnotesize
		$$\prod_{l=0}^L (1+ yB'(l))\leq e^{y(B'(0)-B(0)+B(L))}.$$
	}
\end{lemma}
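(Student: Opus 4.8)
The plan is to linearize the product with the elementary inequality $1+x\le e^x$, which holds for all real $x$. Since $y\ge 0$ and $B'(l)\ge 0$ each factor $1+yB'(l)$ is nonnegative, so
$$\prod_{l=0}^L(1+yB'(l)) \le \prod_{l=0}^L e^{yB'(l)} = \exp\left(y\sum_{l=0}^L B'(l)\right).$$
It therefore suffices to establish the purely additive bound $\sum_{l=0}^L B'(l) \le B'(0) - B(0) + B(L)$ and substitute it into the exponent.

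Next I would isolate the $l=0$ term and control the tail $\sum_{l=1}^L B'(l)$ by an integral. The hypothesis $B''(l)\le 0$ says that $B'$ is non-increasing, so on each unit interval $[l-1,l]$ we have $B'(t)\ge B'(l)$ for every $t$ in the interval, whence $B'(l)\le\int_{l-1}^l B'(t)\,dt$. Summing this over $l=1,\dots,L$ and telescoping via the fundamental theorem of calculus gives
$$\sum_{l=1}^L B'(l) \le \int_0^L B'(t)\,dt = B(L)-B(0).$$
Adding back the separated term $B'(0)$ yields exactly $\sum_{l=0}^L B'(l)\le B'(0)+B(L)-B(0)$, and plugging this into the exponential bound above completes the proof.

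The only delicate point is orienting the integral comparison correctly: because $B'$ is decreasing, its value at the \emph{right} endpoint $B'(l)$ underestimates the integral over $[l-1,l]$, which is precisely the direction needed. I would double-check the boundary behavior, namely that the $l=0$ term cannot be folded into the integral (the comparison only reaches down to $t=0$) and so must be carried separately, which is exactly what produces the additive $B'(0)$ in the exponent. The parameter $y\ge 0$ causes no difficulty, as it merely scales the already-established sum bound monotonically.
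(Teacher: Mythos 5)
Your proof is correct and follows essentially the same route as the paper: both reduce the product to the additive bound $\sum_{l=0}^{L}B'(l)\leq B'(0)-B(0)+B(L)$ via the identical concavity-based integral comparison $B'(l)\leq \int_{l-1}^{l}B'(t)\,dt$. The only cosmetic difference is that you invoke $1+x\leq e^{x}$ directly, whereas the paper derives the equivalent bound $\ln(1+yB'(l))\leq yB'(l)$ by comparing derivatives in $y$ and checking agreement at $y=0$; your version is, if anything, slightly more direct.
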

\begin{proof}
	We find an upperbound of $\ln \prod_{l=0}^{L}(1+yB'(l)) = \sum_{l=0}^{L}\ln(1+yB'(l))$. Differentiating w.r.t. $y$ gives
	\begin{eqnarray*}
		\sum_{l=0}^{L} \frac{1}{1+yB'(l)}\cdot B'(l) &\leq& \sum_{l=0}^{L}B'(l) = B'(0)+\sum_{l=1}^{L}B'(l)\\
		&\leq& B'(0) + \int_{l=0}^{L} B'(l) dl\\
		&=& B'(0) + B(L) - B(0)
	\end{eqnarray*}
	We also have
	$$\sum_{l=0}^{L}\ln(1+yB'(l))|_{y=0}=0=y \cdot [B'(0)-B(0)+B(L)]|_{y=0}. $$
	We conclude that $\sum_{l=0}^{L}\ln(1+yB'(l)) \leq y \cdot [B'(0)-B(0)+B(L)]$ for $y\geq 0$. \qed
\end{proof}

For our  $B'(l)$ and $y$, application of the lemma to our upper bound yields 
\begin{eqnarray*}
	&& \bar{w}(k,T_{bud}) \leq
	(1+\theta^{-1})\sum_{L=0}^{T_{bud}-1} [\theta p d]^{k} e^{(1+\theta^{-1})(B'(0)-B(0)+B(L))} \\
	&& \leq T_{bud}\cdot (1+\theta^{-1}) [\theta p d]^{k} e^{(1+\theta^{-1})(B'(0)-B(0)+B(T_{bud}-1))} .
\end{eqnarray*}

Minimizing $\theta^k e^{(1+\theta^{-1})c}$ for
$$c=B(T_{bud}-1)-B(0)+B'(0)$$
gives $\theta = c/k$ if $c/k\geq (1-\gamma)/\gamma$ and gives $\theta=(1-\gamma)/\gamma$ if $c/k\leq (1-\gamma)/\gamma$.
Substituting this in our upper bound  ($e$ denotes the natural number) yields
\begin{eqnarray}\label{eq.conditions}
&& \bar{w}(k,T_{bud}) \leq \nonumber \\
&& \left\{ \begin{array}{l}
T_{bud} (1+\frac{k}{c}) [\frac{c}{k} pd]^k e^{k+c} = T_{bud}(1+\frac{k}{c}) [epd\frac{c}{k}e^{c/k}]^k,
\hspace{0.5cm} \text{if}\ c/k \geq (1-\gamma)/\gamma \\
T_{bud} (1-\gamma)^{-1}  [\frac{1-\gamma}{\gamma}pd]^{k} e^{(1-\gamma)^{-1} c}, 
\hspace{1.5cm} \text{ if}\ c/k \leq (1-\gamma)/\gamma.
\end{array} \right. \label{GenUB}
\end{eqnarray}
The first case of the upper bound is less interesting as $\frac{c}{k}e^{c/k}$ may be large yielding a bad upper bound.
The second case of the upper bound gives the most insight and  proves the second statement of  Theorem \ref{Theo.wbarUB}.

As an example, suppose that
$$1-f(l) \leq \frac{d}{(l+2)^2}.$$
To get an upper bound, we give the adversary the advantage of having the defender play with the smallest possible learning rates. i.e., $1-f(l)=\frac{d}{(l+2)^2}$. This gives $B'(l)=1/(l+1)$, hence, $B(T_{bud}-1)=\ln(T_{bud})$, $B(0)=0$, and $B'(0)=1$. This implies $c=1+\ln(T_{bud})$.
The second case of the upper bound translates into: If
\begin{equation}T_{bud} \leq e^{-1+k (1-\gamma)/\gamma }, \label{cond} \end{equation}
then the probability of winning for the adversary is at most
$$w(k,T_{bud}) \leq \bar{w}(k,T_{bud}) \leq 
\frac{e^{1/(1-\gamma)}}{1-\gamma} \cdot T_{bud}^{1+1/(1-\gamma)} \cdot  [\frac{1-\gamma}{\gamma}pd]^{k} .
$$

This proves the last statement of the theorem.


\subsection{Delayed Learning -- Proof of Theorem  \ref{Theo.delay} }

In practice, we may not immediately start learning at a rate $1-f(l) \approx d/(l+2)^2$. This will only happen after reaching for example $L^*$ samples. During such a first phase the defender is not yet able to increase $f(l)$ and $f(l)$ remains $0$, i.e., $1-f(l) = 1$. The adversary tries to compromise as many, say $k^*$, nodes as possible before reaching $L^*$ levels.

For $1-f(l)= 1$, we have:
\begin{equation*}
\begin{aligned}
m_1(l) &= (1-\gamma)(1-(p+h)),\\ 
m_2(l) &= (1-\gamma)p,\\
m_3(l) &= \gamma p, \mbox{ and}\\
m_4(l) &= [h-\gamma(h+p)]+\gamma=(1-\gamma)h+\gamma(1-p),\\
\end{aligned}
\end{equation*}
which are all independent of $l$. For this reason we use $m_1$, $m_2$, $m_3$, and $m_4$ where suited in our derivations below.

We are interested in $k^*$ as a function of $\epsilon$ for which 
\begin{eqnarray}\label{eq.k}
u(k^*, L^*)&=& \sum_{k=0}^{k^{*}-1} p(k,L^{*}-1,T_{bud}-1)(\frac{m_4(L^*-1)}{1-m_1(L^*-1)}+\frac{m_3(L^*-1)}{1-m_1(L^*-1)}) + \\ && \hspace{1cm} p(k^*,L^{*}-1,T_{bud}-1)\frac{m_4(L^*-1)}{1-m_1(L^*-1)} \geq 1- \epsilon. \nonumber
\end{eqnarray}
Here $u(k^*,L^*)$ is equal to the probability that $\leq k^*$ nodes will be compromised before level $l=L^*$ is reached for the first time: The sum in (\ref{eq.k}) considers all paths reaching a state $(k,L^*-1)$ for some $k<k^*$ after which a single vertical or diagonal transition reaches level $L^*$ for the first time. The additional term considers paths that reach a state $(k^*,L^*-1)$ after which only a vertical transition reaches level $L^*$ without increasing the number of compromised nodes beyond $k^*$.
If (\ref{eq.k}) holds, then the probability of winning is 
\begin{equation}\label{eq.deta}
\leq \epsilon + (1-\epsilon)w(k-k^{*}(\epsilon),T_{bud}),
\end{equation}
where $w(k-k^{*}(\epsilon),T_{bud})$ is defined for learning rate $f(l+L^*)$ as a function of $l$. In other words, during the first phase a node with $i<k^*(\epsilon)$ is reached with probability $\geq 1-\epsilon$ after which the learning rate increases according to $f(l+L^*)$ leading to a winning state if the Markov model increments $i$ at least another $k-k^*(\epsilon)$ steps.


In order to find a lower bound on $u(k^*,L^*)$ we give the adversary the benefit of an unlimited time budget (implying $(\ref{p-c})=1$) giving
\begin{eqnarray}\label{eq.delta}
u(k^*, L^*)&\geq & \sum_{k=0}^{k^{*}-1} p(k,L^{*}-1,\infty)(\frac{m_4(L^*-1)}{1-m_1(L^*-1)}+\frac{m_3(L^*-1)}{1-m_1(L^*-1)}) \\
&=&  \sum_{B=0}^{\min\{L^*-1,(k^*-2)\}}
\sum_{k=B+1}^{k^{*}-1} (\ref{p-a})\cdot (\ref{p-b})\cdot \frac{m_3(L^*-1)+m_4(L^*-1)}{1-m_1(L^*-1)}.
\end{eqnarray}

We simplify this lower bound by noticing that (\ref{p-a}) does not depend on $k$ and
\begin{eqnarray*}
	\sum_{k=B+1}^{k^{*}-1}(\ref{p-b}) &=& 
	\sum_{k=B+1}^{k^{*}-1}
	{	\sum_{\substack{g_0\geq 1,\dots,g_{L^{*}-1}\geq 1 s.t.\\
				\sum_{l=0}^{L^{*}-1}(g_l-1)=k-1-B}}
		\prod_{l=0}^{L^{*}-1} \bigg(\frac{m_2(l)}{1-m_1(l)} \bigg)^{g_l-1}
	}\\
	&=& \sum_{\substack{g'_0,\dots,g'_{L^*-1}\geq 0 s.t.\\
			\sum_{l=0}^{L^*-1}{g'_l \leq k^*-2-B}}}\prod_{l=0}^{L^*-1} \bigg(\frac{m_2(l)}{1-m_1(l)} \bigg)^{g'_l}\\
	&\geq& \sum_{g'_0=0}^{(k^*-2-B)/L^*} \dots \sum_{g'_{L^*}=0}^{(k^*-2-B)/L^*} 
	\prod_{l=0}^{L^*-1} \bigg(\frac{m_2(l)}{1-m_1(l)} \bigg)^{g'_l} \\
	&=&  \prod_{l=0}^{L^*-1} \sum_{g'_l= 0}^{(k^*-2-B)/L^*}\bigg(\frac{m_2(l)}{1-m_1(l)} \bigg)^{g'_l} \\
	&=&	 \prod_{l=0}^{L^*-1}\frac{1}{1-\frac{m_2(l)}{1-m_1(l)}}\big(1-(\frac{m_2(l)}{1-m_1(l)})^{\frac{k^*-2-B}{L^*}+1} \big) \\
	&=& \prod_{l=0}^{L^*-1} \frac{1-m_1(l)}{m_3(l)+m_4(l)} \bigg(1-(\frac{m_2(l)}{1-m_1(l)})^{\frac{k^*-2-B}{L^*}+1} \bigg)\\
&=&  \bigg( \frac{1-m_1}{m_3+m_4}\bigg)^{L^*} \bigg(1-(\frac{m_2}{1-m_1})^{\frac{k^*-2-B}{L^*}+1} \bigg)^{L^*}.
\end{eqnarray*}
Since $B\leq L^*-1$, this can be further lower bounded as
\begin{eqnarray*}
	\sum_{k=B+1}^{k^{*}-1}(\ref{p-b}) &\geq& 
	\bigg( \frac{1-m_1}{m_3+m_4}\bigg)^{L^*} \bigg(1-(\frac{m_2}{1-m_1})^{\frac{k^*-1}{L^*}} \bigg)^{L^*}.
	\end{eqnarray*}
	
	Notice that this lower bound is independent from $B$, hence,
\begin{eqnarray*}
	u(k^*, L^*)&\geq &	\sum_{B=0}^{\min\{L^*-1,(k^*-2)\}}  (\ref{p-b})\cdot \frac{m_3(L^*-1)+m_4(L^*-1)}{1-m_1(L^*-1)} \\
	&& \hspace{1cm} \cdot  \bigg( \frac{1-m_1}{m_3+m_4}\bigg)^{L^*} \bigg(1-(\frac{m_2}{1-m_1})^{\frac{k^*-1}{L^*}} \bigg)^{L^*} \\
	&=& 	\sum_{B=0}^{\min\{L^*-1,(k^*-2)\}}  (\ref{p-b})\cdot  \\
	&& \hspace{1cm} \cdot  \bigg( \frac{1-m_1}{m_3+m_4}\bigg)^{L^*-1} \bigg(1-(\frac{m_2}{1-m_1})^{\frac{k^*-1}{L^*}} \bigg)^{L^*}.
	\end{eqnarray*}
	
We assume $k^*\geq L^*+1$ and  derive	
\begin{eqnarray*}
	\sum_{B=0}^{\min\{L^*-1,(k^*-2)\}}  (\ref{p-b}) &=&
	\sum_{B=0}^{L^*-1}
	 \sum_{\substack{b_1,\dots,b_{L^*-1}\in\{0,1\} s.t.\\ 
				\sum_{l=1}^{L^*-1}b_l=B}}
		\prod_{l=0}^{L^*-2} \bigg( b_{l+1}\frac{m_3(l)}{1-m_1(l)} + (1-b_{l+1})\frac{m_4(l)}{1-m_1(l)} \bigg) \\
	& =&	 \sum_{b_1,\dots,b_{L^*-1}\in\{0,1\}}\prod_{l=0}^{L^*-2} \bigg( b_{l+1}\frac{m_3(l)}{1-m_1(l)} + (1-b_{l+1})\frac{m_4(l)}{1-m_1(l)} \bigg)
	\\
	&=&\prod_{l=0}^{L^*-2}(\frac{m_3(l)}{1-m_1(l)}+\frac{m_4(l)}{1-m_1(l)})= 
	\bigg(\frac{m_3+m_4}{1-m_1}\bigg)^{L^*-1}.
\end{eqnarray*}
Substituting this in the lower bound for $u(k^*,L^*)$ yields
\begin{eqnarray*}
u(k^*, L^*) &\geq& \bigg(1-(\frac{m_2}{1-m_1})^{\frac{k^*-1}{L^*}} \bigg)^{L^*} \geq 1-L^* (\frac{m_2}{1-m_1})^{\frac{k^*-1}{L^*}} \\
&=& 1- L^* [\frac{(1-\gamma)p}{1-(1-\gamma)(1-(p+h))}]^{\frac{k^*-1}{L^*}}.
\end{eqnarray*}












\subsection{Capacity Region -- Proofs of Corollaries \ref{crcor} and \ref{lemdel}}

\noindent
{\bf Proof of Corollary \ref{crcor}.}
As an example of computing a capacity region, we translate Theorem \ref{Theo.wbarUB} for $1-f(l) \leq \frac{d}{(l+2)^2}$ in terms of a capacity region: We have the security guarantee
\begin{equation}
 T_{bud} \leq e^{-1+k (1-\gamma)/\gamma } \ \ \Rightarrow  \  \ \bar{w}(k,T_{bud}) \leq 
\frac{e^{1/(1-\gamma)}}{1-\gamma} \cdot T_{bud}^{1+1/(1-\gamma)} \cdot  [\frac{1-\gamma}{\gamma}pd]^{k} .
\label{eqimpl}
\end{equation}
If 
\begin{eqnarray}
2^t &\leq & e^{-1+k (1-\gamma)/\gamma }, \mbox{ and} \label{crex}  \\
2^{-s} &\geq& \frac{e^{1/(1-\gamma)}}{1-\gamma} \cdot (2^t)^{1+1/(1-\gamma)} \cdot  [\frac{1-\gamma}{\gamma}pd]^{k} , \nonumber
\end{eqnarray}
then $T_{bud}\leq 2^t$ implies the condition on the left hand side of the implication in (\ref{eqimpl}), hence,
\begin{eqnarray*}
\bar{w}(k,T_{bud}) &\leq & \frac{e^{1/(1-\gamma)}}{1-\gamma} \cdot T_{bud}^{1+1/(1-\gamma)} \cdot  [\frac{1-\gamma}{\gamma}pd]^{k} \\
&\leq & \frac{e^{1/(1-\gamma)}}{1-\gamma} \cdot (2^t)^{1+1/(1-\gamma)} \cdot  [\frac{1-\gamma}{\gamma}pd]^{k} \\
&\leq & 2^{-s}.
\end{eqnarray*}
This shows that the capacity region is characterized by (\ref{crex}), or equivalently after taking logarithms, assuming $\frac{pd(1-\gamma}{\gamma}<1$, and reordering terms, $\boldsymbol{\delta}$, $\boldsymbol{\mu}$, and $\boldsymbol{\xi}$ are given by 2-dimensional vectors
\begin{eqnarray*}
\boldsymbol{\delta} &=& (0, q \ln 2), \mbox{ where } q = \left[ \ln(\frac{\gamma}{pd(1-\gamma)})\right]^{-1}, \\
\boldsymbol{\mu} &=& (\frac{\gamma \ln 2}{1-\gamma}, q (\ln 2)(1+\frac{1}{1-\gamma})), \\
\boldsymbol{\xi} &=& (-\frac{\gamma}{1-\gamma}, q(-\frac{1}{1-\gamma} + \ln \frac{1}{1-\gamma}).
\end{eqnarray*}

\vspace{0.5cm}

\noindent
{\bf Proof of Corollary \ref{lemdel}.}
Delayed learning shows that for all $k^*\geq L^*+1$,
\begin{eqnarray*}
 w(k,T_{bud}) &\leq& (1- u(k^*,L^*)) + w'(k-k^*,T_{bud}) \\
 &\leq&  L^* [\frac{(1-\gamma)p}{1-(1-\gamma)(1-(p+h))}]^{\frac{k^*-1}{L^*}} + w'(k-k^*,T_{bud}).
 \end{eqnarray*}
If 
\begin{eqnarray}
  L^* [\frac{(1-\gamma)p}{1-(1-\gamma)(1-(p+h))}]^{\frac{k^*-1}{L^*}} &\leq& 2^{-(s+1)} \mbox{ and} \label{cond1} \\
 w'(k-k^*,T_{bud}) &\leq& 2^{-(s+1)}, \label{cond2}
\end{eqnarray}
then $w(k,T_{bud})\leq 2^{-s}$. 

We assume $\frac{(1-\gamma)p}{1-(1-\gamma)(1-(p+h))} <1$. Then, after taking logarithms, substituting $k^*=\hat{k}+L^*+1$ with $\hat{k}\geq 0$, and reordering terms,
condition (\ref{cond1}) is equivalent to
\begin{eqnarray*}
(s+1)\delta'' &\leq& \hat{k} +\xi'', \mbox{ where} \\
\delta'' &=& L^* (\ln 2) z \mbox{ and }  
\xi''=L^*[1-(\ln L^*) z] \mbox{ for }
z= \left[ \ln(\frac{1-(1-\gamma)(1-(p+h))}{(1-\gamma)p})\right]^{-1}.
\end{eqnarray*}

Suppose that $w'(\cdot,\cdot)$ corresponds to capacity region $(\boldsymbol{\delta}', \boldsymbol{\mu}',  \boldsymbol{\xi}')$. Then,
condition (\ref{cond2})  is implied by $T_{bud}\leq 2^t$ and
$$ (s+1)\boldsymbol{\delta}' + t \boldsymbol{\mu}' \leq (k-(\hat{k}+L^*+1)) {\bf 1} + \boldsymbol{\xi}'.$$

If we assume learning is delayed sufficiently long such that 
$$ (s+1)\delta'' \geq \xi''$$
(e.g., $L^*\geq z^{-1}/2$ for $s=0$),
then we may choose 
$$\hat{k}=  (s+1)\delta''-\xi''\geq 0.$$
This satisfies condition (\ref{cond1}). Condition (\ref{cond2}) (after substituting $\hat{k}$) is now equivalent to $T_{bud}\leq 2^t$ and
$$ (s+1)(\boldsymbol{\delta}'+\delta'' {\bf 1}) + t \boldsymbol{\mu}' \leq k {\bf 1} + \boldsymbol{\xi}+(\xi''-(L^*+1)){\bf 1}.$$
In other words,
$$ (\boldsymbol{\delta}'+\delta'' {\bf 1}, \boldsymbol{\mu}' , \boldsymbol{\xi}'-\boldsymbol{\delta}' +(\xi''-\delta''-(L^*+1)){\bf 1})$$
is a capacity region.

If we assume learning is moderately delayed such that
$$ (s+1)\delta'' \leq \xi'',$$
then we may choose $\hat{k}=0$ and 
$$ (\boldsymbol{\delta}', \boldsymbol{\mu}' , \boldsymbol{\xi}'-\boldsymbol{\delta}' -(L^*+1){\bf 1}).$$
is a capacity region.

Substituting
$$s\geq \xi''/\delta'' -1 = (L^*[1-(\ln L^*) z])/L^* (\ln 2) z  -1 = (L^*[1-(\ln 2L^*) z])/L^* (\ln 2) z = 1/{z\ln 2} - {\ln (2L^*)}{\ln 2}$$
and
$$\xi''-\delta'' -(L^*+1) = L^*[1-(\ln L^*) z] -  L^* (\ln 2) z -(L^*+1) = L^* (\ln 2L^*) z -1$$
proves the corollary.

\end{document}